\documentclass[10pt]{iopart}
\usepackage{amssymb}
\usepackage{amsthm}
\usepackage{amscd}
\usepackage[usenames,dvipsnames]{color}
\usepackage{pstricks}
\usepackage{graphicx}

\newcommand{\dotT}{\dot{T}}

\newcommand{\telque}{{\,;\,}}

\newcommand{\bbR}{{\mathbb{R}}}

\newcommand{\bfk}{{\mathbf{k}}}

\newcommand{\bfx}{{\mathbf{x}}}

\newcommand{\calD}{{\mathcal{D}}}
\newcommand{\calE}{{\mathcal{E}}}
\newcommand{\calF}{{\mathcal{F}}}

\newcommand{\cotg}{{\mathrm{cotg}}}

\newcommand{\supp}{{\mathrm{supp\,}}}
\newcommand{\singsupp}{{\mathrm{sing\,supp\,}}}
\newcommand{\dd}{{{d}}}

\newcommand{\WF}{{\mathrm{WF}}}

\newcommand{\caplus}{\oplus}

\newcommand{\text}[1]{\,\,\mathrm{#1}\,\,}
\newcommand{\eqref}[1]{(\ref{#1})}

\newtheorem{prop}{Proposition}
\newtheorem{thm}[prop]{Theorem}

\newtheorem{example}[prop]{Example}

\newtheorem{dfn}[prop]{Definition}

\begin{document}
%
%
\title[Introduction to the wavefront set]{A smooth introduction 
to the wavefront set}
\author{Christian Brouder}
\address{Institut de Min\'eralogie, de Physique des Mat\'eriaux et de
Cosmochimie, Sorbonne Universit\'es,
UMR CNRS 7590, UPMC Univ. Paris 06, Mus\'eum National d'Histoire
Naturelle, IRD UMR 206,
4 place Jussieu, F-75005 Paris, France.}
\author{Nguyen Viet Dang and Fr\'ed\'eric H\'elein}
\address{Institut de Math\'ematiques de Jussieu - Paris
  Rive Gauche, \\Universit\'e Paris Diderot-Paris 7,
  CNRS UMR7586, Case 7012, B\^at. Sophie Germain,
  75205 Paris cedex 13, France.}
\begin{abstract}
The wavefront set provides a precise description of the
singularities of a distribution. Because of its ability
to control the product of distributions, the wavefront
set was a key element of recent progress in
renormalized quantum field theory in curved spacetime,
quantum gravity, the discussion of time machines
or quantum energy inequalitites.
However, the wavefront set is a somewhat subtle concept
whose standard definition is not easy to grasp.
This paper is a step by step introduction to the
wavefront set, with examples and motivation.
Many different definitions and new interpretations
of the wavefront set are presented. Some of them
involve a Radon transform.
\end{abstract}

\maketitle

\section{Introduction}

\label{mul-dis-sect}
Feynman propagators are distributions, and
Stueckelberg realized very early that renormalization
was essentially the problem of defining a product
of distributions~\cite{Rivier,StueckelbergR,StueckelbergG}.
This point of view was clarified by Bogoliubov, Shirkov,
Epstein and Glaser~\cite{BS56,Bogoliubov,Epstein}
but was almost forgotten.

In a ground-breaking paper~\cite{Radzikowski}, Radzikowski showed that
the \emph{wavefront set} of a distribution was a crucial
concept to define quantum fields in curved spacetime.
This idea was fully developed into a renormalized
scalar field theory in curved spacetimes by 
Brunetti, Fredenhagen~\cite{Brunetti2}, 
Hollands and Wald~\cite{Hollands2}.
This approach was rapidly extended to the case of
Dirac fields~\cite{Kratzert-00,Hollands-01,Antoni-06,%
Dappiaggi-09,Sanders-10-Dirac,Rejzner-11,Zahn-14},
to gauge fields~\cite{Hollands-08,Fredenhagen-11,Fredenhagen-13}
and even to the quantization of gravitation~\cite{Brunetti-13-QG}.

This tremendous progress was made possible by
a complete reformulation of quantum field theory,
where the wavefront set of distributions plays a
central role, for example to determine the algebra
of microcausal functions and to define a 
spectral condition for time-ordered 
products and 
quantum states~\cite{BFK,Verch-99,Hollands3,Sanders-10}.
The wavefront set was also a decisive tool to discuss the existence
of time-machine spacetimes~\cite{Kay-97}, quantum
energy inequalities~\cite{Fewster-00} and 
cosmological models~\cite{Pinamonti-11}.

Until the early 90s, the wavefront set was rarely
used to solve physical problems.
We only know of a few works in crystal
optics~\cite{Ivrii-77,Esser-87}
and quantum field theory on curved
spacetimes~\cite{Moreno-77,Dimock-79}.
This is probably due to the fact that this concept
is not familiar to most physicists and not easy
to grasp. But now, the wavefront set is here to 
stay and we think that a smooth and physically motivated
introduction to it is worthwhile. This is the purpose
of the present paper.

There are textbook descriptions of the wavefront
set~\cite{HormanderI,Duistermaat,Chazarain,ReedSimonII,%
Friedlander,Strichartz-03,Grigis,Strohmaier,Eskin,Wagschal-11},
but they do not give any clue on its physical meaning
and advanced textbooks are notoriously laconic
(the outstanding exception 
being the book by Gregory Eskin~\cite{Eskin}).

The main use of the wavefront set in quantum field theory is
to provide a condition for the product of distributions.
Indeed, the Feynman propagator is a distribution and
the products of propagators present in a Feynman diagrams
are not well defined. The wavefront set gives a precise 
description of the region of spacetime where the product
is well defined and the value of the Feynman diagram on
the whole spacetime is then obtained by an extension 
procedure~\cite{Brunetti2}.

After this introduction, we discuss in simple terms the
problem of the multiplication of one-dimensional distributions.
This elementary example reveals a natural condition for
two distributions to be multiplied and this condition leads
to the definition of the wavefront set.
After giving elementary examples of wavefront sets, we discuss
in detail the wavefront set of the characteristic function
of a domain $\Omega$ in the plane
(i.e. a function which is equal to 1 on $\Omega$
to 0 outside it).
To bring a physical feel of the concept, we
give two new characterizations of the wavefront set:
the first one uses a Radon transform, the second one counts
the number of intersections of straight lines with the boundary 
of $\Omega$. These two characterizations do not
employ any Fourier transform.
The next section explores the wavefront set of a distribution
defined by an oscillatory integral. This technique is crucial
to calculate the wavefront set of the Wightman and Feynman
propagators in quantum field theory.
The main properties of the wavefront set are listed without proof.
The last section enumerates other definitions of
the wavefront set. 

\section{Multiplication of distributions}
\label{multdissect}
We shall introduce the wavefront set as a condition
required to multiply distributions. We first recall that
a distribution $u\in\calD'(\bbR^n)$
is a continuous linear map from the set of smooth compactly supported
functions $\calD(\bbR^n)$ to the complex numbers,
and we denote $u(f)$ by $\langle u,f\rangle$. For example,
if $\delta$ is the Dirac delta distribution, then
$\langle \delta,f\rangle=f(0)$. If $g$ is a locally integrable
function, then we can consider it as a distribution
by associating to $g$ the distribution
$\langle u_g,f\rangle=\int g(x) f(x) dx$
(for a nice introduction to distributions see for example
\cite{Friedlander}).

It is well known that distributions can generally not
be multiplied~\cite{Schwartz}. 
The first reason is the very
definition of distributions as objects which generalize the functions
but
for which the `value at some point' has no sense in general. But,
motivated by
questions in theoretical physics (e.g. quantum field theory),
we may ask under which circumstances it is possible
to \emph{extend} the product of ordinary functions to distributions.
In most cases this is just impossible.
For instance we cannot make sense of the
square of $\delta$: 
a simple way to convince yourself of that is to study the family
of functions $\chi_\varepsilon:\mathbb{R}\longrightarrow \mathbb{R}$ for
$\varepsilon>0$
defined by $\chi_\varepsilon(x) =1/\varepsilon$
if $|x|\leq \varepsilon/2$ and $\chi_\varepsilon(x) = 0$ otherwise.
For any $f\in\calD(\bbR)$ we have
$\int_\bbR \chi_\varepsilon(x) f(x) dx = 
\varepsilon^{-1} \int_{-\varepsilon/2}^{\varepsilon/2}
f(x) dx = \varepsilon^{-1} (\varepsilon f(0) + O(\varepsilon^3))$
and $\lim_{\varepsilon\to0} \chi_\varepsilon = \delta$.
However, the square of $\chi_\varepsilon$ does not 
converge to a distribution: 
$\int_\bbR \chi^2_\varepsilon(x) f(x) dx = 
\varepsilon^{-2} \int_{-\varepsilon/2}^{\varepsilon/2}
f(x) dx = \varepsilon^{-2} (\varepsilon f(0) + O(\varepsilon^3))$
diverges for $\varepsilon\to 0$.

In some other cases it is possible to define a product, but we
loose some good properties.
Consider the example of the
Heaviside step function $H$, which is defined
by $H(x)=0$ for $x <0$ and $H(x)=1$ for
$x\ge 0$. Its associated 
distribution, denoted by $\theta$, is 
\begin{eqnarray*}
\langle \theta,f\rangle &=&
\int_{-\infty}^\infty H(x)f(x) dx =  \int_0^\infty f(x) dx.
\end{eqnarray*}
The \emph{function} $H$ can obviously be multiplied with itself
and $H^n=H$ for any integer $n>0$.
As we shall see, it is possible to define a product of distributions
such that $\theta^n=\theta$ as a distribution.
But then, we loose the compatibility of the product
with the Leibniz rule because, by taking 
the derivative of both sides we
would obtain $n\theta^{n-1}\theta'=\theta'$.
The identity $\theta'=\delta$ and
$\theta^{n-1}=\theta$
would give us
$n\theta\delta=\delta$ for all integers $n>1$.
Since the left hand side depends linearly on $n$ and the right
hand side does not and is not equal to zero,
we reach a contradiction.

The Leibniz rule is essential for applications in 
mathematical physics and we shall
define a product of distributions 
obeying the Leibniz rule. 
We first enumerate some conditions under which
distributions can be safely multiplied.

\subsection{In which cases can we multiply distributions ?
}\label{inwhichcases}
\subsubsection{A distribution times a smooth function}
The product of distributions is well
defined when one of the two distributions is a smooth
function. Indeed, consider a distribution
$u\in \calD'(\bbR^n)$ and a smooth function
$\phi\in C^\infty(\bbR^n)$. Then, for all test function
$f\in\calD(\bbR^n)$ we
can define the product of $u$ and $\phi$ by
$\langle u\phi,f\rangle=\langle u,\phi f\rangle$.

\subsubsection{Distributions with disjoints singular supports}
We can also define the product of two distributions
when the singularities of the distributions are disjoint.
To make this more precise, we recall that the \emph{support
of a function} $f$, denoted by $\supp f$, is
the closure of the set of points where the
function is not zero~\cite[p.~14]{HormanderI}. 
For example, the support of the Heaviside function is
$\supp H={[}0,+\infty{[}$. Note that although a function is
zero outside its support, it can also vanish
at isolated points of its support, because of the closure
condition
of the definition. For example the support of the sine function
is $\bbR$ although $\sin (n\pi)=0$.

However, the \emph{support of a distribution} cannot
be defined as the support
of a function because the value of a distribution
at a point is generally not defined. Hence we
define the support by duality:
we say that
the point $x$ does not belong to the support of the distribution
$u$ if and only if there is an open neighborhood $U$ of $x$ such that
$u$ is zero on $U$, in other words if
$\langle u,f\rangle=0$ for all test functions $f$ whose
support is contained in $U$~\cite[p.~12]{Friedlander}.
For example $\supp \delta = \{0\}$
and $\supp\theta=[0,+\infty]$.
Similarly, we can define the singular support of a distribution
$u\in\calD'(\bbR^n)$, denoted by $\singsupp u$, by saying that
$x\notin\singsupp u$ if and only if there is a neighborhood
$U$ of $x$ such that the restriction of $u$ to $U$ is
a smooth function, in other words if there is a smooth
function $\phi\in C^\infty(U)$ such that
$\langle u,f\rangle=\langle \phi,f\rangle=\int \phi(x)f(x) dx$
for all test functions $f$ supported on 
$U$~\cite[p.~108]{Friedlander}.
For example $\singsupp\delta=\{0\}$, 
$\singsupp\theta=\{0\}$.

A more elaborate example is the distribution $u\in
\mathcal{D}'(\mathbb{R})$,
defined by:
$u(x)=(x+i0^+)^{-1}$, i.e. $u$
is the limit
in $\mathcal{D}'(\mathbb{R})$ of $u_\varepsilon(x):=
(x+i\epsilon)^{-1}$, when
$\varepsilon >0$ and $\varepsilon\rightarrow 0$, this means
that~\cite[\S~2]{Gelfand-ShilovI}:
\begin{eqnarray*}
\langle u, f\rangle &=& \lim_{\epsilon\to 0^+} 
\int_{-\infty}^\infty \frac{f(x) dx}{x+i\epsilon}
= \lim_{\epsilon\to 0^+} \int_{\epsilon}^\infty
  \frac{f(x)-f(-x)}{x}dx -i \pi f(0). 
\end{eqnarray*}
If $y\not=0$, consider the open set 
$U=(y-|y|/2,y+|y|/2)$. 
Take a smooth function $\chi$ such that $\chi(x)=1$ 
for $|x-y|< 3|y|/4$ and
$\chi(x)=0$ for $|x-y|> 7|y|/8$.
Then, for any $f$ supported on $U$ we have $f(0)=0$
and $f=f\chi$. Thus,
\begin{eqnarray*}
\langle u, f\rangle &=& \langle u,\chi f\rangle
= \int_{|y|/8}^\infty
  \frac{\chi(x)f(x)-\chi(-x)f(-x)}{x}dx 
\\&=&
 \int_{-\infty}^\infty
  \frac{\chi(x)f(x)}{x}dx =
  \langle \phi,f\rangle,
\end{eqnarray*}
where $\phi(x)=\chi(x)/x$ is smooth because $\chi(x)=0$
for $|x|<|y|/8$ (see fig.~\ref{figchi}).
\begin{figure}
\begin{center}
\includegraphics[width=7.0cm]{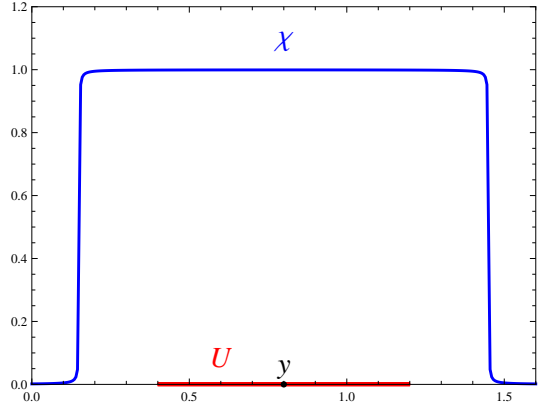}
\caption{In this figure we take $y=0.8$, the open set is
$U=(0.4,1.2)$ and the smooth function $\chi$
is supported on $(0.1,1.5)$.
  \label{figchi}}
\end{center}
\end{figure}
As a consequence, every $y\not=0$ is
not in the singular spectrum of $u$ 
and $\singsupp u=\{0\}$ because the imaginary
part of $u$ is proportional to a Dirac $\delta$ distribution.

We can now state an important theorem~\cite[p.~55]{HormanderI}.
\begin{thm}
\label{singsupthm}
If $u$ and $v$ are two distributions in $\calD'(\bbR^n)$ such
that $\singsupp u \cap \singsupp v=\emptyset$, then the
product $uv$ is well defined.
\end{thm}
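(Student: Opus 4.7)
My plan is to construct the product $uv$ by localization, using the smooth-times-distribution construction already available from the previous subsection, and then gluing. The key observation is that the hypothesis $\singsupp u \cap \singsupp v = \emptyset$ means every point of $\bbR^n$ has a neighborhood in which at least one of the two distributions coincides with a smooth function.

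First I would cover $\bbR^n$ by two open sets: $U = \bbR^n \setminus \singsupp u$ (on which $u$ restricts to a smooth function $\phi_u$) and $V = \bbR^n \setminus \singsupp v$ (on which $v$ restricts to a smooth function $\phi_v$). By hypothesis $U \cup V = \bbR^n$, since $\singsupp u \cap \singsupp v = \emptyset$. Then I would pick a smooth partition of unity $\{\chi_U, \chi_V\}$ subordinate to this cover, so $\chi_U + \chi_V = 1$ with $\supp \chi_U \subset U$ and $\supp \chi_V \subset V$. The candidate definition is
\begin{eqnarray*}
\langle uv, f \rangle &:=& \langle v, \chi_U \phi_u f \rangle + \langle u, \chi_V \phi_v f \rangle,
\end{eqnarray*}
for every test function $f \in \calD(\bbR^n)$. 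Each term makes sense because multiplying a distribution by a smooth compactly supported function yields a test function, so I am only applying a distribution to a bona fide element of $\calD(\bbR^n)$. Continuity and linearity in $f$ follow from those of $u$ and $v$.

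The main thing to check, and the only real obstacle, is that this definition is independent of the choice of partition of unity, so that $uv$ is a well-defined distribution and agrees with the ordinary pointwise product when both distributions happen to be smooth. For independence, I would take a second partition $\{\chi_U', \chi_V'\}$ and subtract the two candidate expressions. The difference can be rearranged as $\langle v, (\chi_U - \chi_U') \phi_u f \rangle - \langle u, (\chi_U - \chi_U') \phi_v f \rangle$, after using $\chi_V - \chi_V' = -(\chi_U - \chi_U')$. Now $\chi_U - \chi_U'$ is smooth and compactly supported in $U \cap V$, because both functions equal $1$ on the complement of $V$ and $0$ on the complement of $U$. On $U \cap V$ the distributions $u$ and $v$ are both smooth and equal respectively to $\phi_u$ and $\phi_v$, so the two terms become the integrals $\int \phi_v (\chi_U - \chi_U') \phi_u f\, dx$ and $\int \phi_u (\chi_U - \chi_U') \phi_v f\, dx$, which cancel.

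Finally, the compatibility check: when both $u$ and $v$ are themselves smooth everywhere, $U = V = \bbR^n$, $\phi_u = u$, $\phi_v = v$, and the formula reduces to $\int u v f\, dx$, i.e. the usual product. This, together with the local construction, shows that $uv$ thus defined restricts on every open set where one of $u, v$ is smooth to the familiar smooth-times-distribution product, so the construction is the natural extension of the ordinary product of functions.
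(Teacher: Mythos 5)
Your construction is essentially the same as the paper's: the paper's cutoff $\chi$ (zero near $\singsupp u$, one near $\singsupp v$) plays the role of your $\chi_U$, and $1-\chi$ plays the role of your $\chi_V$, so the two defining formulas agree term by term. Your proof adds a check, not carried out in the paper's brief argument, that the definition does not depend on the choice of partition of unity; that check is correct and makes the proof more complete.
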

\begin{proof}
We first notice that, if $f\in \calD(\bbR^n)$ is supported outside
the singular support of $v$, then $vf$ is smooth 
and we can define the product by
$\langle uv,f\rangle=\langle u,vf\rangle$.
Similary, $\langle uv,f\rangle=\langle v,uf\rangle$
if $f$ is supported outside the singular support of $u$.
This definition of $uv$ extends to all test 
functions $f$
by using a smooth function $\chi$  which is equal to zero on a neighborhood of 
the singular support of $u$ and equal to one on a neighborhood of the singular
support of $v$. Then
$\langle uv,f\rangle=\langle v,u\chi f\rangle + 
\langle u,v (1-\chi) f\rangle$.
This product is associative and commutative~\cite[p.~55]{HormanderI}.
\end{proof}

\subsubsection{The singular oscillations of the distributions are
transversal}
Consider the two distributions $u= \delta \otimes 1$ and $v = 1\otimes
\delta$ in $\mathcal{D}'(\mathbb{R}^2)$,
i.e., $\forall \varphi \in \mathcal{D}(\mathbb{R}^2)$, $\langle
u,\varphi\rangle = \int_\mathbb{R}\varphi(0,y)dy$
and $\langle v,\varphi\rangle = \int_\mathbb{R}\varphi(x,0)dx$. Then we
can define their product by
$uv = (\delta \otimes 1)(1\otimes \delta):= \delta\otimes \delta = 
\delta^{(2)}$, i.e.
$\langle uv,\varphi\rangle=\varphi(0,0)$,
since $\langle uv,\varphi\rangle=\int\int u(x)v(y)\varphi(x,y)dxdy=\int u(x) \left(\int v(y)\varphi(x,y)dy\right) dx=\int u(x) \varphi(x,0)dx=\varphi(0,0)$
by the Fubini theorem for distributions.
Here $u$ and $v$ are singular on the lines $\{x=0\}$ and $\{y=0\}$
respectively, which have a non empty
intersection $\{(0,0)\}$. However the oscillations of both distributions
are orthogonal at that
point, so that this definition makes sense.
But actually the orthogonality is not essential and, as we will see, the important
point is the transversality.

Indeed we can extend this example to measures which are supported by non orthogonal
lines: let
$\alpha:\mathbb{R}^2\longrightarrow \mathbb{R}^2$ be a \emph{linear invertible} map and set
$\alpha = (\alpha^1,\alpha^2)$
and $u_\alpha: = \alpha^*u = u\circ \alpha$ and $v_\alpha := \alpha^*v =
v\circ \alpha$, where $\forall w\in \mathcal{D}'(\mathbb{R}^2)$, $\forall\varphi\in
\mathcal{D}(\mathbb{R}^2)$,
$\langle \alpha^*w,\varphi \rangle:= (\hbox{det}\alpha)^{-1}\langle
w,\varphi \circ \alpha^{-1} \rangle$.
These distributions are well-defined 
and they are singular on the line of equation $\alpha^1=0$ and
$\alpha^2=0$ respectively. Moreover we can define
$u_\alpha v_\alpha$
by setting $u_\alpha v_\alpha:= \alpha^*(uv) =
\alpha^*(\delta^{(2)})$.
Hence here $u_\alpha v_\alpha =
(\hbox{det}\alpha)^{-1}\delta^{(2)}$ and
we see that the product makes sense as long as $\hbox{det}\alpha\neq 0$,
which means that
the singular supports of $u_\alpha$ and $v_\alpha$ are
\emph{transversal}.

\subsubsection{The singularities of the distributions are transversal in the
complex world}
This last case looks as the most mysterious at first glance and concerns
complex
valued distributions. Consider the distribution $u(x) = 1/(x+i0^+)$
defined 
previously, i.e. the limit of $u_\epsilon(x)=1/(x+i\varepsilon) =
\frac{x}{x^2+\varepsilon^2} - \frac{i\varepsilon}{x^2+\varepsilon^2}$
when $\varepsilon>0$ and $\varepsilon\rightarrow 0$.
(Hence $u = \mathrm{pv}(\frac{1}{x}) -i\pi \delta_0$.)
Observe that $(u_\varepsilon)' = - (u_\varepsilon)^2$, $\forall
\varepsilon>0$. Thus since
$(u_\varepsilon)'$ converges to $u'$ in $\mathcal{D}'(\mathbb{R})$, we
can set 
$u^2 := -u'$. Moreover since any polynomial relation in $u_\varepsilon$
and its derivatives
which follows from Leibniz rule is satisfied ($u_\varepsilon$ being a
smooth function),
the same holds for $u$. One can define similarly the square of
$\overline{u}(x) =  1/(x-i0^+)$. However this
recipe fails for defining the product of $u$ by $\overline{u}$.

A similar mechanism works for making sense of the square of the
\emph{Wightman function} (see Section \ref{oscillatory}).
One way to understand what's happening is to remark that we multiply
distributions which are
boundary values of holomorphic functions on the same domain.

In order to really understand all these examples and go beyond, we need
to revisit
them by using refined tools such as: the Radon transform and
the Fourier transform. This will lead us to H{\"o}rmander's 
definition of wavefront sets.

\subsection{The product of distributions by using Fourier transform}
We remark that the Fourier transform
of a product of distributions (when it is defined) is the convolution
of the Fourier transforms of these
distributions~\cite[p.~102]{Friedlander}:
$\widehat{uv}=\hat{u}\star \hat{v}$, if it exists.
Therefore, we can define the product of two distributions
$u$ and $v$ as the inverse Fourier transform of $\hat{u}\star \hat{v}$.
However, this definition, which requires the Fourier transforms of $u$ and $v$
to be defined and their convolution product to make sense, can be improved. Indeed
it does not take into account 
the fact that the product of two distributions is \emph{local},
i.e. that its definition on the neighbhorhood of a point depends only on
the restriction of the distributions on that neighborhood.
Therefore, we can localize the distributions by multiplying them with
a test function: if $u\in \calD'(U)$ and
$f\in\calD(U)$, then $fu$ is a distribution with compact support in $U$ and
we can extend it to a distribution defined on $\bbR^n$ by setting it to equal to zero outside $U$.
Let us still denote by $fu$ this compactly supported distribution on
$\bbR^n$.
It has a Fourier transform $\widehat{fu}(k)$ which is an entire analytic
function of $k$ by the Paley--Wiener--Schwartz Theorem.

Following the physicist's
convention~\cite{Itzykson},\cite[p.~32]{Peskin}, we define the
Fourier transform of $u$ by
\begin{eqnarray*}
\calF(u)(k) &=& \hat{u}(k) = \int_{\bbR^n} dx e^{i k\cdot x} u(x),
\end{eqnarray*}
where $k\cdot x= \sum_i k_i x^i$ (we could interpret this quantity as an Euclidean scalar product
between two vectors in $\bbR^n$; however as we will see in Section \ref{oscillatory}
it is better to understand $k$ as a \emph{covector} and the product $k\cdot x$ as a
\emph{duality} product, this is the reason for the lower indices used for the coordinates
of $k$ and the upper indices used for the coordinates of $x$).
More rigorously, the above definition applies to functions
$f$ of rapid decrease and, for a tempered distribution
$u$, the Fourier transform is defined by
$\langle \hat{u},f\rangle=\langle u,\hat{f}\rangle$.
The inverse Fourier transform is
\begin{eqnarray*}
u(x) &=& \int \frac{dk}{(2\pi)^n} e^{-i k\cdot x} \hat{u}(k),
\end{eqnarray*}
where $n$ is the dimension of spacetime.
The same convention was used, for example,
by Franco and Acebal~\cite{Franco-07}.
Note the relation between this Fourier transform 
and the one used in other references:
$\hat{u}(k)=\mathcal{F}_H(u)(-k)$~\cite{HormanderI,Alinhac-07},
or $\hat{u}(k)=(2\pi)^{n/2}\mathcal{F}_{RS}(u)
(-k)$~\cite{ReedSimonII,Strohmaier,Wagschal-11}.

We can now give a definition of the product of two distributions.
Note that there are alternative definitions, under different
hypotheses (and we will meet another one later on).
For a general overwiew about the existing options, 
see \cite{Oberguggenberger1,Oberguggenberger}.
\begin{dfn}\label{definition-product-1}
Let $u$ and $v$ in $\calD'(\bbR^n)$. We say that
$w\in \calD'(\bbR^n)$ is the product of $u$ and $v$
if and only if, for each $x\in \bbR^n$, there exists some
$f\in \calD(\bbR^n)$, with $f=1$ near $x$, so that for
each $k\in \mathbb{R}^n$ the integral
\begin{eqnarray}
\widehat{f^2w}(k) &=& 
(\widehat{fu}\star \widehat{fv})(k)=
\frac{1}{(2\pi)^n} \int \widehat{fu}(q)
\widehat{fv}(k-q) \dd q,
\label{disprod}
\end{eqnarray}
is absolutely convergent.
\end{dfn}
When it exists, this product has many desirable properties: it is
unique, commutative, associative (when all intermediate products are
defined) and it coincides with the product 
of Theorem~\ref{singsupthm} when
the singular supports of $u$ and $v$ 
are disjoint~\cite[p.~90]{ReedSimonII}.

Let us consider some examples.
\begin{example}
If $u=v=\delta$, the product is not defined.
\end{example}
\begin{proof}
For any test function $f$ satisfying the hypothesis of the definition,
$f\delta(x)=f(0)\delta(x)=\delta(x)$ and
$\widehat{f\delta}(k)=1$, so that 
$\int \widehat{f\delta}(q) \widehat{f\delta}(k-q) \dd q
= \int \dd q $, which is not absolutely convergent.
\end{proof}

\begin{example}\label{exampleHeaviside}
 If $u = v = \theta$, the product is well defined.
\end{example}
\begin{proof}
 For any $f\in \mathcal{D}(\mathbb{R})$, $\widehat{f\theta}(k) =
\int_0^\infty e^{ikx}f(x)dx$ satisfies the
 uniform bounded $|\widehat{f\theta}(k)|\leq \|f\|_{L^1}:=
\int_\mathbb{R}|f(x)|dx$. Moreover an integration by
 part gives us also
 $\widehat{f\theta}(k) = \frac{i}{k}\left[f(0) + g(k)\right]$ with
$g(k):= \int_0^\infty e^{ikx}f'(x)dx$
 and we thus have the uniform bound $|\widehat{f\theta}(k)|\leq
\frac{1}{|k|}(|f(0)|+ \|f'\|_{L^1})$.
 Hence, for any $k\in \mathbb{R}$,
$\vert\widehat{f\theta}(k)\vert\leqslant C(1+\vert k\vert)^{-1} $
for $C=\|f'\|_{L^1}+\|f\|_{L^1}+|f(0)|$ and the
integral defining $(\widehat{fu}\star \widehat{fv})(k)$
is absolutely convergent because
\begin{eqnarray*}
  \int_\mathbb{R}\left|\widehat{f\theta}(q)\widehat{f\theta}(k-q)\right|dq
 \leq  \int_\mathbb{R}\frac{C^2 dq}{(|k-q|+1)(|q|+1)}
 \leq    \tilde{C} \int_\mathbb{R}\frac{dq}{(|q|+1)^2},
\end{eqnarray*}
where $\tilde{C}=C^2\sup_q\frac{(|q|+1)}{(|k-q|+1)}$
is finite.
\end{proof}

\begin{example}\label{example1surx}
\label{exuu}
If $u(x)=v(x)= 1/(x+i0^+)$, the product exists.
\end{example}
\begin{proof}
By contour integration, 
$\hat{u}(k)=-2i\pi\theta(-k)$. Thus,
\begin{eqnarray*}
\widehat{fu}(k) &=& \frac{1}{2\pi}
\int_{\mathbb{R}} d q \hat{f}(q) \hat{u}(k-q) = -i \int_{k}^\infty \hat{f}(q) d q,
\end{eqnarray*}
tends to
$-2\pi i f(0)=-2\pi i$ for $k\to -\infty$.

To show
that the integral in eq.~(\ref{disprod}) is absolutely
convergent, we define the smooth function
$F(k)=\int_k^{+\infty} \hat{f}(q) dq$.
The Fourier transform of a test function $f$ 
is fast decreasing: for any integer $N$, there is a constant
$C_N$ for which 
$|\hat{f}(q)|\le C_N (1+|q|)^{-N}$~\cite[p.~252]{HormanderI}.
Thus, 
for $k\ge 0$
\begin{eqnarray*}
|F(k)| &\le & C_N\int_{k}^\infty (1+q)^{-N}dq = \frac{
C_N}{N-1} (1+k)^{1-N},
\end{eqnarray*}
is fast decreasing and for any $k\in \bbR$
\begin{eqnarray*}
|F(k)| &\le & C_N\int_{-\infty}^\infty (1+|q|)^{-N}dq = \frac{2
C_N}{N-1}.
\end{eqnarray*}
Therefore, the right hand side of eq.~(\ref{disprod}) 
can be written 
$-(2\pi)^{-1}(\int_{-\infty}^k + \int_k^0+\int_0^{+\infty})F(q)
F(k-q) dq$.
The first integral is absolutely convergent because
$|F(q)F(k-q)|\le 2 C_N^2 (N-1)^{-2} (1+|k-q|)^{1-N}$,
the second because
the integrand is smooth and the domain is finite
and the third integral because
$|F(q)F(k-q)|\le 2 C_N^2 (N-1)^{-2} (1+q)^{1-N}$.

To compute the product $w=u^2$ we take $f=1$
and we calculate directly
\begin{eqnarray*}
\widehat{u^2}(k) &=& \frac{1}{2\pi} \int_{\bbR} \hat{u}(q)
\hat{u}(k-q) \dd q = -2\pi \int_{\bbR}\theta(-q)\theta(q-k)
= 2\pi k \theta(-k).
\end{eqnarray*}
\end{proof}
Note that
the Fourier transform of the derivative of a distribution $v$ is
given by $\widehat{v'}(k)=-ik \hat{v}(k)$.
Thus we recover the relation $\widehat{u^2}=-\widehat{u'}$
i.e. $u(x)^2 = (x+i0^+)^{-2} = -\frac{d}{dx}(x+i0^+)^{-1}$.

\begin{example}
\label{exuv}
If $u(x)=1/(x+i0^+)$ and
$v(x)=1/(x-i0^+)$, the product does not
exist.
\end{example}
\begin{proof}
We have $\hat{u}(k)=-2i\pi\theta(-k)$ and
$\hat{v}(k)=2i\pi\theta(k)$. Thus,
\begin{eqnarray*}
\widehat{fv}(k) &=& \frac{1}{2\pi}
\int d q \hat{f}(q) \hat{v}(k-q) = i \int_{-\infty}^k \hat{f}(q) d q,
\end{eqnarray*}
which decreases fast for $k\to -\infty$ and tends to
$2\pi i f(0)=2\pi i$ for $k\to +\infty$.
We define $G(k)=\int_{-\infty}^k \hat{f}(q) dq$
and recall that
 $F(k)=\int_k^{+\infty}\widehat{f}(q)dq$
so
that $F+G=2\pi$.
The right hand side of eq.~(\ref{disprod}) 
can be written as the limit for $M\to \infty$ of 
$(2\pi)^{-1} I_M(k)$ with
\begin{eqnarray*}
I_M(k) &=& \int_{-M}^\infty 
   F(q) G(k-q) dq \\&=& 2\pi \int_{-M}^\infty  F(q) dq
   - \int_{-M}^\infty F(q) F(k-q) dq .
\end{eqnarray*}
We saw in the previous example that the second term
is absolutely convergent and for the first term we 
use $F=2\pi-G$ to write
\begin{eqnarray*}
\int_{-M}^\infty  F(q) dq &=&
  \int_0^\infty F(q) dq + 2\pi\int_{-M}^0 dq
  - \int_{-M}^0 G(q) dq.
\end{eqnarray*}
The decay properties of $F$ and $G$ imply that
the first and third terms are absolutely convergent,
but the second term is $2\pi M$ which diverges
for $M\to\infty$. Thus, there is no test function
$f$ with  $f(0)=1$ such that $I_M(k)$ converges:
the product of distributions does not exist.
\end{proof}

In example~\ref{exuu}, the distribution $u^2$
was calculated without using the localizing
test function $f$. In general this is not possible.
For example, consider
\begin{example}\label{exampleu1u2}
\label{exupv}
\begin{eqnarray*}
u(x) &=& 
\frac{1}{x+i0^+} 
+
\frac{1}{x+a-i0^+},
\end{eqnarray*}
with $a\not=0$.
Then, $u^2$ exists.
\end{example}
Indeed, denote by $u_1$ and $u_2$ the
two terms on the right hand side.
We showed that $u_1^2$ exists and
the same reasoning implies that 
$u_2^2$ exists. The cross term
$u_1u_2$ exists because the singular support
of $u_1$, which is $\{0\}$, is disjoint from 
the singular support
of $u_2$, which is $\{-a\}$.
Thus, $u^2$ exists although the Fourier transform of
$u$ (i.e.  $\hat{u}(k)=-2i\pi\theta(-k)
+2i\pi e^{-ika}\theta(k)$)
is slowly decreasing in both
directions.
Therefore, the role of the localizing test function $f$
is not only to make the Fourier transform of $fu$
exist (even when the Fourier transform of $u$ does not),
but also to isolate the singularities of $u$.
In example \ref{exupv},
the two singular points of $u$ are $x=0$ and $x=-a$.
To localize the distribution around $x=0$, 
we multiply $u$ by a smooth
function $f$ such that $f(0)=1$ and
$f(x)=0$ for $|x|> |a|/2$, so that
$\widehat{fu}(k)=-i\int_k^\infty \hat{f}(q)dq$ is 
fast decreasing in the direction of $k>0$
because the contribution of $1/(x+a-i0^+)$ 
is eliminated. Conversely, if we multiply the distribution
by a smooth function $g$ such that $g(-a)=1$
and $g(x)=0$ for $|x+a|>|a|/2$, then 
$\widehat{gu}(k)=i\int_{-\infty}^k \hat{g}(q)dq$,
which is fast decreasing in the direction $k<0$.\\

\subsubsection{Discussion}
In the previous examples, we saw that the calculation of 
the product of two distributions by using the Fourier transform
looks rather tricky. In particular, it seems that we
have to know the Fourier transform of the 
product of each distribution with an arbitrary function.

Moreover even when we are able to define it,
the product of distribution does not always
satisfy the Leibniz rule $\partial (uv) = (\partial u)v + u(\partial v)$.
For instance the product of $\theta$ makes sense (Example~\ref{exampleHeaviside})
but does not respect the Leibniz rule (see Section \ref{multdissect}).
On the other hand the square of $1/(x+i0^+)$ can be defined (see Example \ref{example1surx})
and this definition agrees with Leibniz rule.

Fortunately, H\"ormander devised a powerful condition on a pair of distributions to:
1) guarantee the existence of their product
without computing it; 2) ensure that this product satisfies the
Leibniz rule.

As a preparation for this condition, we can
analyze why the product exists in 
example~\ref{exuu} and not in example~\ref{exuv}.
In example~\ref{exuu}, the support of $\hat{u}$  is
$({-}\infty,0)$ and, because of the convolution
formula $\hat{u}(q) \hat{u}(k-q)$, the support of
$\hat{u}(q) \hat{u}(k-q)$ as a function of $q$
is the finite interval ${[}k,0{]}$ if $k\le0$ and
is empty if $k>0$. Thus, the integral over $q$ is 
absolutely convergent.
On the other hand, in example~\ref{exuv}
the support of 
$\hat{u}(q) \hat{v}(k-q)$ is 
$({-}\infty,\min(k,0))$, which is infinite.

In general, for the convolution integral to be well defined, 
we just need that the product
$\widehat{fu}(q) \widehat{fv}(k-q)$ 
decreases fast enough for large $q$ for the integral
over $q$ to be absolutely convergent.
Note also that, for any distribution $u$ and for any smooth function $f$
with compact support, since $fu$ is a distribution with compact support,
its Fourier transform
$\widehat{fu}$ grows at most polynomially at infinity, i.e. there exists
some $p\in \mathbb{N}$
and some constant $C>0$ such that $|\widehat{fu}(k)|\leq C(1+|k|)^p$
everywhere.
Hence it is enough that one of the two factors in the product
$\widehat{fu}(q) \widehat{fv}(k-q)$ is fast decreasing
at infinity to ensure that the product is fast decreasing.
In example~\ref{exuu}, $\widehat{fu}(q)$
decreases very fast for 
$q\to+\infty$ but does not decrease
for $q\to-\infty$. 
If $\widehat{fu}(q)$ decreases slowly in
some directions $q$, this must be
compensated by a fast decrease of
$\widehat{fv}(k-q)$ in the same direction $q$.
This is exactly what happens in
example~\ref{exuu} and not in 
example~\ref{exuv}.

Lastly Example \ref{exampleu1u2} confirms that a general condition for the existence
of a product
of distributions should use the Fourier transform
of distributions localized around singular points.

It is now time to introduce the key notion for defining 
H{\"o}rmander's product of distributions:
the \emph{wavefront set}.

\section{The wavefront set}
\label{guessWFsect}
We want to find a sufficient condition by which the product of
distributions defined in eq.~\eqref{disprod} is 
absolutely convergent. In this integral,
the distribution $fv$ is compactly supported because
$f\in\calD(\bbR^n)$. Thus, there is constant $C$
and an integer $m$ such that
$|\widehat{fv}(k-q)|\le C(1+|k-q|)^m$. The smallest $m$ for which
this inequality is satisfied is called the \emph{order}
of the distribution $fv$.
The integral \eqref{disprod} would be absolutely convergent if we had
$|\widehat{fu}(q)|\le C'(1+|q|)^{-m-n-1}$.
However, since we also wish the product of distributions
to be compatible with
derivatives through the Leibniz rule
$\partial(uv)=(\partial u)v + u(\partial v)$ and since 
a derivative of order $n$ increases the order of $u$ by $n$,
what we really need is that 
$|\widehat{fu}(q)|$ decreases faster than any
inverse power of $1+|q|$.
We give now a precise definition of 
the property of fast decrease.

\subsection{Outside the wavefront set: the regular directed points}
We start by defining some basic tools: the conical
neighborhoods and the fast decreasing functions.
\begin{dfn}
A \emph{conical neighborhood} of a point $k\in \bbR^n\setminus \{0\}$
is a set $V\subset \bbR^n$ such that $V$ contains 
the ball $B(k,\epsilon)=\{q\in \bbR^n\telque |q-k| < \epsilon\}$
for some $\epsilon >0$ and, for any $p$ in $V$ and any $\alpha >0$,
$\alpha p$ belongs to $V$.
\end{dfn}
An example of conical neighborhood of $k$ is given in figure~\ref{figconical}.
\begin{figure}
\begin{center}
\includegraphics[width=8.0cm]{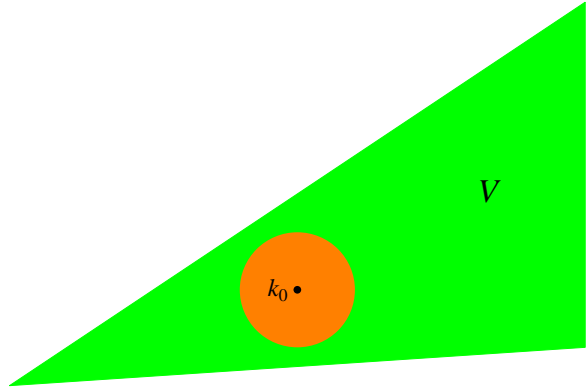}
\caption{Example of a conical neighborhood of
   $k$. 
  \label{figconical}}
\end{center}
\end{figure}
\begin{dfn}
A smooth function $g$ is said to be \emph{fast decreasing} on
a conical neighborhood $V$ if, for any integer $N$,  there is a constant 
$C_N$ such that
$|g(q)|\le C_N (1+|q|)^{-N}$ for all $q\in V$.
\end{dfn}
For example, the function $e^{-q^2}$ is fast decreasing
on $\bbR^n$. We need functions to be fast
decreasing in a conical neighborhood and not only along
a specific direction (which would be the case if 
$C_N$ were a function of $q$), because a single
direction has zero measure and we would not be
able to control the integral \eqref{disprod}.
According to the discussion of the previous
section we see that the integral \eqref{disprod}
converges absolutely if the directions where
$\widehat{fv}(k-q)$ decrease slowly correspond
to regions where $\widehat{fu}(q)$ is fast decreasing.

We define now the ``nice points'' around which $\widehat{fu}$
is fast decreasing. They are called
\emph{regular directed points}~\cite[p.~92]{ReedSimonII}:
\begin{dfn}
For a distribution $u\in \calD'(\bbR^n)$, 
a point $(x,k)\in \bbR^n\times (\mathbb{R}^n\backslash\{0\})$
is called a \emph{regular directed point} of $u$ if and only if
there exist: (i) a function $f\in \calD(\bbR^n)$ with
$f(x)=1$ and (ii) a closed conical neighborhood  $V\subset \bbR^n$ 
of $k$, such that 
$\widehat{fu}$ is fast decreasing on $V$.
\end{dfn}
The relevance of the concept
of regular directed point also stems from the 
following theorem~\cite[p.~252]{HormanderI}
\begin{thm}
\label{fastthm}
A compactly supported distribution
$u\in \mathcal{E}'(U)$ is a smooth function if and only if 
$\widehat{u}(q)$ is fast decreasing on $\bbR^n$.
\end{thm}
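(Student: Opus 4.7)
The plan is to prove the equivalence via the standard Fourier duality between smoothness and decay. Since $u$ is compactly supported, $u\in\calS'(\bbR^n)$ is tempered, so $\hat u$ is well-defined as a tempered distribution and Fourier inversion applies throughout the argument.

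For the direct implication, assume $u$ is smooth with compact support. Integration by parts against $e^{ik\cdot x}$ gives the identity $(ik)^\alpha\hat u(k)=\widehat{\partial^\alpha u}(k)$ for every multi-index $\alpha$. Each $\partial^\alpha u$ is continuous and compactly supported, hence in $L^1(\bbR^n)$, so $|k^\alpha\hat u(k)|\leq \|\partial^\alpha u\|_{L^1}$ uniformly in $k$. Combining this with the trivial bound $|\hat u(k)|\leq \|u\|_{L^1}$ and summing over $|\alpha|\leq N$ yields the desired estimate $|\hat u(k)|\leq C_N(1+|k|)^{-N}$ for every $N$.

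For the converse, assume $\hat u$ is fast decreasing on $\bbR^n$. Then $k\mapsto k^\alpha \hat u(k)$ lies in $L^1(\bbR^n)$ for every multi-index $\alpha$ (pick $N>|\alpha|+n$ in the fast-decrease bound). Set $v(x):=(2\pi)^{-n}\int e^{-ik\cdot x}\hat u(k)\,dk$. Dominated convergence justifies differentiation under the integral sign to any order, giving $\partial^\alpha v(x)=(2\pi)^{-n}\int(-ik)^\alpha e^{-ik\cdot x}\hat u(k)\,dk$, which is continuous in $x$. Hence $v\in C^\infty(\bbR^n)$. Finally, Fourier inversion for tempered distributions yields $u=\calF^{-1}(\hat u)=v$ in $\calS'(\bbR^n)$, so $u$ coincides with the smooth function $v$ on $U$.

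The main obstacle is conceptual bookkeeping rather than any analytic difficulty: one must invoke Fourier inversion at the level of tempered distributions to identify the original distribution $u$ with the explicit smooth function $v$ produced by the inverse-Fourier integral, and one must verify that the fast-decrease hypothesis is precisely what is needed to push derivatives of arbitrary order inside that integral. Both of these are routine once compact support is used to guarantee that $u$ is tempered.
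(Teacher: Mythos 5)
Your argument is correct. The paper does not actually prove Theorem~\ref{fastthm}; it only states it with a citation to H\"ormander, so there is no in-paper proof to compare against. What you wrote is the standard textbook proof from that reference: integration by parts plus compact support gives the polynomial bounds in one direction, and absolute convergence of the inverse Fourier integral together with differentiation under the integral sign and $\calS'$-Fourier inversion gives the other. One cosmetic point: with the paper's convention $\hat u(k)=\int e^{ik\cdot x}u(x)\,dx$, integration by parts actually yields $\widehat{\partial^\alpha u}(k)=(-ik)^\alpha\hat u(k)$ rather than $(ik)^\alpha\hat u(k)$; the sign has no effect on the absolute-value estimates you use, but it is worth getting the convention right.
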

This theorem is physically reasonable because,
if $f$ is a smooth function, then $f(x)e^{ik\cdot x}$
oscillates widely when $k$ is large, so that the
average of this expression (i.e. $\widehat{f}(k)$) is
very small.
Theorem~\ref{fastthm} implies that any singularity
of a distribution can be detected by an absence of
fast decrease in some direction: a point
$x$ is in the singular support if and only if there is a direction
$k$ where the Fourier transform is not fast decreasing.
However, if $x\in \singsupp u$, there can be
directions $k$ such that $(x,k)$ is regular directed.
In example~\ref{exuu}, we saw
that $\widehat{fu}(k)$ is rapidly decreasing for $k>0$ but
not for $k<0$. This brings us finally to the definition of the 
wavefront set

\subsection{The definition of the wavefront set and the Product Theorem}
\begin{dfn}
\label{defWFRS}
The \emph{wavefront set} of a distribution $u\in \calD'(\bbR^n)$
is the set, denoted by $\WF(u)$,
of points $(x,k)\in \bbR^n\times (\mathbb{R}^n\backslash\{0\})$
which are not regular directed for $u$.
\end{dfn}

In other words, for each point
of the singular support of $u$, the wavefront set of $u$ is composed of the directions
where the Fourier transform of $fu$ is not fast decreasing, for $f$
a sufficiently small support.
The name ``wavefront set'' comes from the fact that
the singularities of the solutions of the wave equation
move within it~\cite[p.~274]{HormanderI}, so that
the wavefront set describes the evolution of the wavefront.
The wavefront set is a refinement of the singular support,
in the sense that the singular support of $u$
is the set of points $x\in \bbR^n$, such that
$(x,k)\in \WF(u)$ for some nonzero $k\in \bbR^n$.

Now we see how this definition can be used to
determine the product of two distributions $u$
and $v$. Broadly speaking,
if a point $x$ belongs to the singular
support of $u$ and $v$, then the product of 
$u$ and $v$ exists at $x$ if, for all
directions $q$, either 
$\widehat{fu}(q)$ or 
$\widehat{fv}(k-q)$ is rapidly decreasing.
In particular, if $(x,q)$ belongs to $\WF(u)$,
then $(x,-q)$ must not belong to $\WF(v)$.
This is called \emph{H\"ormander's condition}
and the precise theorem is~\cite[p.~267]{HormanderI}:
\begin{thm}[Product Theorem]
\label{disprodthm}
Let $u$ and $v$ be distributions in $\calD'(U)$. Assume that
there is no point $(x,k)$ in $\WF(u)$ such that
$(x,-k)$ belongs to $\WF(v)$,
then the product $uv$ can be defined.
Moreover, if so, then
\begin{equation}\label{conclusion-mthm}
 \WF(uv)\subset S_+ \cup S_u \cup S_v,
\end{equation}
where $S_+=\{(x,k+q)| 
(x;k)\in \WF(u)\,\, \mathrm{and}\,\, 
(x;q)\in \WF(v)\}$,
$S_u=\{(x;k)| (x;k)\in \WF(u)\,\,\mathrm{and}\,\, x\in \supp(v)\}$ and
$S_v=\{(x;k)| (x;k)\in \WF(v)\,\,\mathrm{and}\,\, x\in \supp(u)\}$.
\end{thm}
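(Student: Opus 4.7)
The plan is to reduce everything to the local convergence test of Definition~\ref{definition-product-1}. I fix an arbitrary $x_0\in U$ and construct a test function $f\in\calD(U)$ with $f(x_0)=1$ and support small enough that $\supp f\subset U$. The product is then defined if, for every $k\in\bbR^n$, the convolution $(\widehat{fu}\star\widehat{fv})(k)$ is absolutely convergent and, after multiplying by a second cut-off, defines a tempered function of $k$ so that it can be realized as $\widehat{f^2 w}(k)$ for some distribution $w$. The Leibniz rule and the independence of the construction from $f$ follow from the standard partition-of-unity argument once existence is known locally, so I concentrate on the convergence of the convolution integral.

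First I would exploit H\"ormander's hypothesis. For each direction $\omega\in S^{n-1}$ either $(x_0,\omega)\notin\WF(u)$ or $(x_0,-\omega)\notin\WF(v)$. Using the definition of regular directed point, for each such $\omega$ I get an open conical neighborhood $V_\omega$ of $\omega$ on which either $\widehat{f_u u}$ or $\widehat{f_v v}$ is fast decreasing, for suitable $f_u,f_v$ equal to $1$ near $x_0$. By compactness of $S^{n-1}$ I extract a finite subcover $V_1,\dots,V_N$ and pass to a common $f$ (the product of the finitely many bump functions, still equal to $1$ at $x_0$). After this step $\bbR^n\setminus\{0\}$ is covered by finitely many open cones $\Gamma_1,\dots,\Gamma_N$ such that on each $\Gamma_i$ either $\widehat{fu}$ is fast decreasing or $\widehat{fv}(-\cdot)$ is fast decreasing (after a sign convention). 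In particular, given a subordinate smooth conic partition of unity $\{\chi_i\}$, I write
\begin{eqnarray*}
(\widehat{fu}\star\widehat{fv})(k) &=& \sum_{i=1}^N \int \chi_i(q)\,\widehat{fu}(q)\,\widehat{fv}(k-q)\,dq.
\end{eqnarray*}
On each piece one factor is fast decreasing in $q$ and the other grows at most polynomially (by the Paley--Wiener--Schwartz bound on compactly supported distributions), so absolute convergence follows. This is the step I expect to be the main technical obstacle, because the sign convention must be tracked carefully: the decay of $\widehat{fv}(k-q)$ as $|q|\to\infty$ with $q\in\Gamma_i$ is governed by the behavior of $\widehat{fv}$ in the opposite direction, which is exactly why the hypothesis involves $(x,-k)$ rather than $(x,k)$.

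Finally, to establish the inclusion $\WF(uv)\subset S_+\cup S_u\cup S_v$, I fix $(x_0,k_0)$ not in the right-hand side and produce a conical neighborhood $W$ of $k_0$ on which $\widehat{f^2 w}$ is fast decreasing. The idea is to split the convolution integral into three pieces: $q$ lying in a small conical neighborhood of the cone where $\widehat{fu}$ fails to decay, $k-q$ lying in a small conical neighborhood of the cone where $\widehat{fv}$ fails to decay, and the remainder where both factors are fast decreasing. The assumption on $(x_0,k_0)$ forbids $k_0$ from being a sum $k+q$ of bad directions of $u$ and $v$ at $x_0$ (ruling out $S_+$), and, by shrinking $\supp f$ to avoid $\supp v$ or $\supp u$ where applicable (ruling out $S_u$ and $S_v$), forces each of the three pieces to decay faster than any polynomial in $k$ uniformly for $k\in W$. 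Routine estimates of convolutions of fast decreasing functions with polynomially bounded ones, carried out on these three regions, then complete the proof.
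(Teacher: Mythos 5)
The paper itself does not prove this theorem; it states it and cites H\"ormander~\cite[p.~267]{HormanderI} for the existence part and Grigis~\cite[p.~84]{Grigis} for the support condition, so there is no in-paper argument to compare against. Your plan is essentially the standard proof from H\"ormander: reduce to the convolution test after localizing, cover $S^{n-1}$ by conic neighborhoods coming from regular directed points of $u$ or of $v$ (one of the two must supply such a neighborhood around each direction, by the hypothesis), pass to a finite subcover by compactness, and split the convolution integral along this conic decomposition so that on each cone one factor decays rapidly while the other is only polynomially bounded by Paley--Wiener--Schwartz. You correctly identify the crucial sign mechanism — the decay of $\widehat{fv}(k-q)$ when $q$ runs to infinity in direction $\omega$ is governed by $\widehat{fv}$ near $-\omega$, which is precisely why the hypothesis involves $(x,-k)$ rather than $(x,k)$.

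Two technical points would need to be made explicit in a finished proof. First, when you replace the individual cutoffs $f_u,f_v$ by a single common $f$ equal to their product, you need the lemma (H\"ormander 8.1.1) that multiplying a compactly supported distribution by another smooth cutoff preserves rapid decay of the Fourier transform on any slightly shrunk closed cone; without this the compactness step does not close. Second, for the inclusion $\WF(uv)\subset S_+\cup S_u\cup S_v$, the three regions you describe need to be set up so that the resulting estimate decays in $|k|$, uniformly for $k$ in a cone $W$ around $k_0$, not merely in $|q|$: the effective splitting is (i) $q$ outside a conic enlargement of the bad cone $\Gamma_u$ of $\widehat{fu}$, where $\widehat{fu}(q)$ decays; (ii) $q$ inside that enlargement but $k-q$ outside the corresponding enlargement of $\Gamma_v$, where $\widehat{fv}(k-q)$ decays — here one uses that $k_0\notin\Gamma_v$ (supplied by $(x_0,k_0)\notin S_v$ when $x_0\in\supp u$, and by shrinking $\supp f$ to make $fu=0$ otherwise) to handle the small-$|q|$ regime where $k-q$ is close to $k$; and (iii) $q$ inside the enlargement of $\Gamma_u$ and $k-q$ inside that of $\Gamma_v$, which forces $k$ into a neighborhood of $\Gamma_u+\Gamma_v$ and is therefore vacuous for $k\in W$ because $(x_0,k_0)\notin S_+$ and $S_+$ is closed conic. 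With these refinements made explicit the plan is correct.
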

\noindent
\textbf{Remarks}
\begin{enumerate}
\item This theorem is absolutely fundamental for the
theory of renormalization in curved spacetimes. With this simple
criterion, we can prove that a product of distributions
exists even if we cannot calculate their Fourier transforms
and even if we do not know the explicit form of
the distributions.
 \item The condition involving the support of $u$  
in $S_v$ and the support of $v$ in $S_u$
in $\WF(uv)$
is given in \cite[p.~84]{Grigis} but
is usually not stated explicitly~\cite[p.~267]{HormanderI}
\cite[p.~21]{Duistermaat}
\cite[p.~95]{ReedSimonII}, \cite[p.~527]{Chazarain},
\cite[p.~153]{Friedlander}, \cite[p.~193]{Strichartz-03},
\cite[p.~61]{Eskin}. This support condition is
imperative to calculate the
wavefront set of example~\ref{delta1delta2}
or of the Feynman propagator in section~\ref{Feynpropsect}.
\item When H\"ormander's condition holds,
then the product of distributions satisfies the Leibniz
rule for derivatives, because derivatives
do not extend the wavefront set~\cite[p.~256]{HormanderI}).
\item Note that if $u$ and $v$ satisfy H\"ormander's condition, then 
their product exists in the sense of Definition~\ref{definition-product-1}. 
The converse is not true in general. However,
if the product of distributions is extended beyond H\"ormander's
condition, then it is generally not compatible with
the Leibniz rule, as shown by the example of the
Heaviside distribution at the beginning 
of section~\ref{multdissect}.
\item H\"ormander's condition of the Product Theorem can be rephrased
by saying that $S_+$ does not meet the zero section 
(of the cotangent bundle over $U$), i.e. that
$S_+\cap (U\times \{0\}) = \emptyset$.
\item For any pair $A$ and $B$ of subsets of $U\times \bbR^n$, we can define
$A\caplus  B:= \{(x,k+q)|(x,k)\in A,(x,q)\in B\}$. We then observe that
$S_+ = WF(u)\caplus WF(v)$ and hence H\"ormander's condition
amounts to saying that
$WF(u)\caplus WF(v)$ does not intersect the zero section.
On the other hand if we set
$\underline{WF}(u):= WF(u)\cup(\hbox{supp }u\times \{0\})$, etc.,
we then always have $\underline{WF}(u)\caplus  \underline{WF}(v) =
S_+\cup S_u\cup S_v\cup (\hbox{supp }(uv)\times \{0\})$. Moreover if 
H\"ormander's condition holds then
$\hbox{supp }(uv)\times \{0\}$ is disjoint from $S_+\cap S_u\cap S_v$
and thus Conclusion (\ref{conclusion-mthm}) is equivalent to the inclusion
$\underline{WF}(uv)\subset \underline{WF}(u)\caplus \underline{WF}(v)$.
\end{enumerate}

\subsection{Simple examples and applications of the Product Theorem}
We give a few very simple examples.

\begin{example}
The simplest example is $\delta(x)$ in $\calD'(\mathbb{R}^n)$,
for which $\WF(\delta)=\{(0;k)| k\in\mathbb{R}^n, k\not=0\}$. Thus,
the powers of $\delta$ cannot be defined.
\end{example}
\begin{proof}
The singular support of $\delta(x)$ is $\{0\}$
and $\widehat{f \delta}(k)=f(0)$ is not fast decreasing if
$f(0)\not=0$. This proves that 
$\WF(\delta)=\{(0;k)| k\in\mathbb{R}^n, k\not=0\}$. 
To show that the product is not allowed,
consider any point $(0;k)$ of $WF(\delta)$, then
$(0;-k)$ is also a point of $\WF(\delta)$ and the 
H\"ormander condition is not satisfied. 
\end{proof}
\begin{example}
\label{Heavisideex}
The wavefront set of the Heaviside distribution
$\theta$ is $\WF(\theta)=\{(0;k)\telque k\not=0\}$.
There is a constant $C$ such that
$|\widehat{\theta f}(k)|\le C/(1+|k|)$ for all $k$.
\end{example}
\begin{proof}
The Heaviside distribution is smooth for $x<0$ and
$x>0$ because it is constant there. Thus, the only
possible singular point is $x=0$.
Consider a smooth compactly supported function $f$
such that $f(0)=1$. We have for $k\not=0$
\begin{eqnarray}
\widehat{\theta f}(k) &=& \int_0^\infty e^{ikx} f(x) dx 
 = \frac{(-i)}{k} 
       \int_0^\infty (e^{ikx})' f(x) dx 
\nonumber\\&=&
 \frac{i f(0) }{k} + \frac{i}{k} 
              \int_0^\infty e^{ikx} f'(x) dx,
\label{identity1-on-theta}
\end{eqnarray}
where the prime denotes a derivative with respect to $x$
and we integrated by parts. A further integration by part
gives us
\begin{equation}\label{identity2-on-theta}
\widehat{\theta f}(k) = \frac{i f(0) }{k} 
- \frac{f'(0)}{k^2} - \frac{1}{k^2} 
              \int_0^\infty e^{ikx} f''(x) dx.
\end{equation}
Let $L$ be the length of $\supp f$ and, for $n=0,1,2$, let $M_n$ be a constant 
such that $|f^{(n)}(x) | \le M_n$ for all $x$.
Using $f(0)= 1$,
Identity (\ref{identity2-on-theta}) implies that $|\widehat{\theta f}(k) - \frac{i}{k}|\leq
\frac{M_1+LM_2}{k^2}$.  Hence $(0,k)\in WF(\theta)$, $\forall k\neq 0$.
On the other hand (\ref{identity1-on-theta}) implies both $|\widehat{\theta f}(k)|\leq LM_0$
and $|\widehat{\theta f}(k)|\leq \frac{1+LM_1}{|k|}$. We hence deduce that
$|\widehat{\theta f}(k)|\leq \frac{C}{1+|k|}$, for some constant $C$.
\end{proof}
The wavefront set of $\theta$ is the same as the wavefront set of
$\delta$. This explains why the powers of $\theta$ are
not allowed in the sense of H\"ormander. 
\begin{example}
$u(x)=1/(x+i0^+)$,
then $\WF(u)=\{(0;k), k<0\}$. 
Thus, $u^2$ exists and $\WF(u^2)=\WF(u)$.
\begin{proof}
The proof is obvious from example~\ref{exuu}
(see also~\cite[p.~94]{ReedSimonII}, where
one must recall that the sign is opposite because
of the different convention for the Fourier transform).
\end{proof}
\end{example}
\begin{example}
$v(x)=1/(x-i0^+)$,
then $\WF(v)=\{(0;k), k>0\}$. 
Thus, $v^2$ exists and $\WF(v^2)=\WF(v)$,
but we cannot conclude that
$uv$ exists (it does not, as we saw in example~\ref{exuv}).
\end{example}
\begin{example}
We consider again example~\ref{exupv}:
\begin{eqnarray*}
u(x) &=& 
\frac{1}{x+i0^+} 
+
\frac{1}{x+a-i0^+},
\end{eqnarray*}
with $a\not=0$.
Then, 
$\WF(u)=\{(0;k), k<0\} \cup \{(-a;k), k>0\}$
and $u^2$ exists, with $\WF(u^2)=\WF(u)$.
\end{example}
\begin{example}
\label{delta1delta2}
(See \cite[p.~97]{ReedSimonII}).
Let $\delta_1$ and $\delta_2$ be the distributions in 
$\calD'(\bbR^2)$ defined by $\langle \delta_1,f\rangle=\int d y f(0,y)$
and $\langle \delta_2,f\rangle=\int d x f(x,0)$. Then,
$\WF(\delta_1)=\{(0,y;\lambda,0) | 
y\in \mathbb{R}, \lambda\not=0\}$ and
$\WF(\delta_2)=\{(x,0;0,\mu) | 
x\in \mathbb{R}, \mu\not=0\}$.
Thus, $\delta_1\delta_2$ exists and
$\WF(\delta_1\delta_2) \subset
\{(0,0;\lambda,\mu), \lambda\not=0,\mu\not=0\}
\cup \{(0,0;\lambda,0), \lambda\not=0\} 
\cup \{(0,0;0,\mu), \mu\not=0\} $, where
we used 
$\supp(\delta_2)=\{(x,0)| x\in \bbR\}$
and
$\supp(\delta_1)=\{(0,y)| y\in \bbR\}$.
Note that the estimate of the wavefront set of
$\delta_1\delta_2$ would be much worse if
the support of $\delta_2$ and $\delta_1$
had not been taken into account in 
$S_{\delta_1}$ and $S_{\delta_2}$ of the
Product Theorem.
In that case the inclusion is in fact
an equality because
$\WF(\delta_1\delta_2) =
\{(0,0;\lambda,\mu), 
(\lambda,\mu)\not=(0,0)\}$.
\end{example}
\begin{proof}
Let $y\in\mathbb{R}$, we want to calculate $WF\left(\delta_1\right)$
at $(0,y)$.  Take a test function 
$f(x_1,x_2)$ which is equal to one
around $(0,y)$. Then,
\begin{eqnarray*}
\widehat{f\delta_1}(k) &=&
\int d x_1 d x_2 f(x_1,x_2) \delta(x_1) 
e^{i k_1 x_1+i k_2 x_2} =
\int d x_2 f(0,x_2) 
e^{i k_2 x_2}.
\end{eqnarray*}
Take $k=(k_1,k_2)$ and observe the decay of 
$\widehat{f\delta_1}(\lambda k)$. If 
$k_2\not=0$ this is a fast decreasing function of 
$\lambda$ because $f(0,x_2)$ is a smooth compactly
supported function of $x_2$. 
If $k_2=0$, then we have
$\widehat{f\delta_1}(k_1,0) =
\int d x_2 f(0,x_2)$,
which is independent of $k_1$, so that 
$\widehat{f\delta_1}(\lambda k_1,0)$ is not fast
decreasing. This proves that
$\WF(\delta_1)$ has the given form.
A similar proof yields $\WF(\delta_2)$.
The rest follows from the fact that
$\delta_1\delta_2$ is the two-dimensional delta function.
\end{proof}

\section{The wavefront set of a characteristic function}
Now that we know the definition of the wavefront set,
we shall get the feel of it by studying in detail
the characteristic distribution $u$ of a region $\Omega$ of 
$\bbR^n$, defined by
$\langle u,f\rangle = \int_\Omega f(x) dx$. We shall also revisit
it in section~\ref{presentation-Radon}.

\subsection{The upper half-plane}\label{half-plane}
For concreteness we start from
the characteristic distribution of the upper half-plane
\begin{eqnarray*}
\langle u,f\rangle &=& \int_{-\infty}^\infty dx_1 \int_0^\infty dx_2
f(x_1,x_2).
\end{eqnarray*}
This is the distribution corresponding to the function
equal to one on the upper half-plane
(i.e. if $x_2\ge0$) and to zero on the lower half-plane
(i.e. if $x_2<0$).
It is intuitively clear that the singular support
of $u$ is the line $(x_1,0)$. Now take a point
$(x_1,0)$ of the singular support and a test
function $f$ which is non-zero on $(x_1,0)$. What are the
directions of slow decrease of $\widehat{fu}$?
It seems clear that $\widehat{fu}(k)$ decreases fast when
$k$ is along $(1,0)$, because we do not feel the step
of $u$ if we walk along it and do not cross it. 
But what about the other directions? Does the wavefront
set contain all the directions that cross the step
or just the direction $(0,1)$ which is perpendicular to it?

The wavefront set of $u$ can be obtained by noticing that
$u$ is the (tensor) product of the constant function 1
for the variable $x_1$ by the Heaviside distribution
$\theta(x_2)$. Then, a standard
theorem~\cite[p.~267]{HormanderI} gives us
$\WF(u)=\{(x_1,0;0,\lambda),\lambda\not=0\}$. 
In other words, the wavefront set 
detects the direction perpendicular to the
step. It is instructive to make an explicit
calculation to understand why it is so.

We use an idea of Strichartz~\cite[p.~194]{Strichartz-03}
and consider test functions $f(x_1,x_2)=f_1(x_1)f_2(x_2)$.
This is not really a limitation because any test
function can be approximated by a finite sum of such products.
Then $\widehat{uf}(k)=\widehat{f_1}(k_1)\widehat{\theta f_2}(k_2)$.
We want to show that, if $k_1\not=0$, for every integer $N$
there is a constant $C_N$ such that
$|\widehat{uf}(\tau k)|\le C_N (1+\tau|k|)^{-N}$
for every $\kappa >0$.
We already know that there is a constant $D_N$ such that
$|\widehat{f_1}(\tau k_1)|\le D_N (1+\tau|k_1|)^{-N}$ because
$f_1$ is smooth and a constant $C$ such that
$|\widehat{f_2}(\tau k_2)|\le C (1+\tau|k_2|)^{-1}$
(see Example~\ref{Heavisideex}).
We are going to show that, if the component $k_1$ of 
$k$ is not zero, the fast decrease of $\widehat{f_1}(\tau k_1)$
induces the fast decrease of $\widehat{uf}(\tau k)$.
If $k_1\not=0$, we have $|k|\le \alpha |k_1|$ 
where $\alpha=|k|/|k_1|$.
Note that $\alpha\ge 1$ because $|k_1|\le |k|$.
Hence $(1+\tau|k|) \le \alpha (1+\tau|k_1|)$
and 
\begin{eqnarray*}
|\widehat{uf}(\tau k)|\le  C D_N 
    (1+\tau|k_1|)^{-N} (1+\tau|k_2|)^{-1}
  \le C D_N\alpha^N (1+\tau|k|)^{-N},
\end{eqnarray*}
where we bounded $(1+\tau|k_2|)^{-1}$ by $1$.
Finally, if $k_1\not=0$, then
$|\widehat{uf}(\tau k)|\le  C_N(1+\tau|k|)^{-N}$
for all $\kappa>0$ with $C_N=\alpha^N C D_N$. 
This result was obtained
for a single vector $k$, but it can be extended to
a cone around $k$ by increasing the value of $\alpha$.

\subsection{Characteristic function of general domains}
More generally, we can consider the characteristic function
of any domain $\Omega$ in $\bbR^n$ limited by a smooth surface $S$.
The characteristic function of $\Omega$ is the function
$\chi_\Omega$ such that $\chi_\Omega(x)=1$ if $x\in \Omega$
and $\chi_\Omega(x)=0$ if $x\notin \Omega$. The characteristic
function $\chi_\Omega$ corresponds to a distribution $u_\Omega$
defined by $\langle u_\Omega,f\rangle=\int_\Omega f(x) dx$.
The wavefront set of $u_\Omega$ is given by~\cite[p.~129]{Taylor-81}:
\begin{prop}
\label{exdisk}
Let $\Omega\subset \mathbb{R}^n$ be a region with smooth
boundary $S$ and let $u_\Omega=\chi_\Omega$ be the characteristic
distribution of $\Omega$.
Then $\WF (u_\Omega)= \{(x,k); x\in S, 
\,\,\mathrm{and}\,\,k\,\,\mathrm{normal}\,\,\mathrm{to}
\,\,S\}$.
\end{prop}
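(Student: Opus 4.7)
The plan is to localize the problem near a boundary point of $S$ and reduce it, via a change of variables, to the upper half-plane computation of section \ref{half-plane}. Away from $S$, the characteristic function $\chi_\Omega$ is locally constant (hence smooth), so $\singsupp u_\Omega \subset S$, which already forces $\WF(u_\Omega) \subset S \times (\bbR^n \setminus \{0\})$. The real work is to identify, at each $x \in S$, exactly which covectors $k$ appear.

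Fix a boundary point $x_0 \in S$. Because $S$ is a smooth hypersurface, the implicit function theorem yields a neighborhood $U$ of $x_0$ and a diffeomorphism $\Phi : U \to V \subset \bbR^n$ sending $S \cap U$ onto $\{y_n = 0\} \cap V$ and $\Omega \cap U$ onto the upper half-space $\{y_n > 0\} \cap V$. On $U$ we then have $\chi_\Omega = \chi_H \circ \Phi$, where $\chi_H$ is the characteristic function of the upper half-space. The $n$-dimensional version of section \ref{half-plane} is essentially immediate: writing $u_H$ as the tensor product of the constant function $1$ in the variables $y_1, \dots, y_{n-1}$ with the Heaviside distribution $\theta(y_n)$, the same argument as in the 2D case (applied to separable test functions $f(y) = f_1(y_1)\cdots f_{n-1}(y_{n-1}) f_n(y_n)$) shows that $\widehat{f u_H}(\tau k)$ is fast decreasing whenever some $k_i$ with $i < n$ is non-zero, so that
\[
\WF(u_H) \subset \{(y', 0;\, 0, \lambda) : y' \in \bbR^{n-1},\ \lambda \neq 0\}.
\]
For the reverse inclusion, taking $f_i(0) = 1$ for all $i$ one gets $\widehat{f u_H}(0,\dots,0,k_n) = \widehat{f_1}(0)\cdots\widehat{f_{n-1}}(0)\,\widehat{f_n \theta}(k_n)$, and Example~\ref{Heavisideex} shows that $\widehat{f_n \theta}(k_n)$ is of order $1/|k_n|$ but not fast decreasing; hence the conormal directions do lie in $\WF(u_H)$.

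To transfer this back to $u_\Omega$, I invoke the standard diffeomorphism-invariance of the wavefront set (one of the main properties of $\WF$ listed later in the paper): for a diffeomorphism $\Phi$,
\[
(x, k) \in \WF(\Phi^* u_H) \iff \bigl(\Phi(x),\, (d\Phi(x)^{-1})^T k\bigr) \in \WF(u_H).
\]
Since the pull-back of the conormal bundle of $\{y_n = 0\}$ under $\Phi$ is exactly the conormal bundle of $S \cap U$, this gives
\[
\WF(u_\Omega) \cap (U \times (\bbR^n \setminus \{0\})) = \{(x, k) : x \in S \cap U,\ k \perp T_x S,\ k \neq 0\}.
\]
Taking the union over a covering of $S$ by such charts yields the claimed description.

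The main obstacle is the use of the diffeomorphism-invariance of $\WF$, which has not actually been proved in the excerpt; this is the one non-elementary ingredient one must import. A secondary subtlety is that the half-plane discussion of section \ref{half-plane} as written only established that non-conormal directions are regular directed, so the reverse inclusion $N^*S \subset \WF(u_\Omega)$ requires an additional (but short) lower-bound estimate on $\widehat{\theta f}$ in the conormal direction, as above.
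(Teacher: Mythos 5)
The paper does not actually prove Proposition~\ref{exdisk}: it cites Taylor and offers only the hand-waving remark that near a boundary point the domain looks like a half-space. Your localize-flatten-pull-back strategy is exactly the rigorous version of that remark, and the forward inclusion $\WF(u_\Omega)\subset N^*S$ is handled correctly (though you would do better to invoke the tensor-product property (i) of Section~\ref{propWF} directly for $u_H = 1\otimes\theta$, rather than the separable-test-function heuristic, which needs a density argument to be made airtight).

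The genuine gap is in your reverse inclusion. To show $(x_0,k_0)\in\WF(u_H)$ you must show that \emph{every} admissible cutoff $f$ with $f(x_0)=1$ fails to produce fast decrease in a conic neighborhood of $k_0$; exhibiting one separable $f$ that fails is not enough. Moreover, for a non-separable $f$ the relevant one-dimensional profile is $g(y_n)=\int_{\bbR^{n-1}} f(y',y_n)\,dy'$, and nothing forces $g(0)\neq 0$ even when $f(x_0)=1$; if $g$ and all its derivatives vanish at $0$ the ray $\tau e_n$ gives no obstruction at all, so the argument cannot be rescued ray by ray. A clean fix is available with the tools already on the table: (1) $u_H$ is discontinuous across $\{y_n=0\}$, so $\singsupp u_H=\{y_n=0\}$, and by property (d) the fibre $\WF_{(y_0',0)}(u_H)$ is nonempty; (2) your forward inclusion pins this nonempty fibre inside the conormal line $\{(0,\dots,0,\lambda):\lambda\neq0\}$; (3) since $u_H$ is real-valued, one may restrict to real cutoffs $f$, for which $\widehat{fu_H}(-q)=\overline{\widehat{fu_H}(q)}$, so $\WF(u_H)$ is invariant under $k\mapsto -k$ and both signs of $\lambda$ must occur. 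The same three observations, transported through your diffeomorphism, complete the proof of the proposition. With that repair your argument is correct and is the standard proof the paper only alludes to.
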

\begin{figure}
\begin{center}
\includegraphics[width=7.0cm]{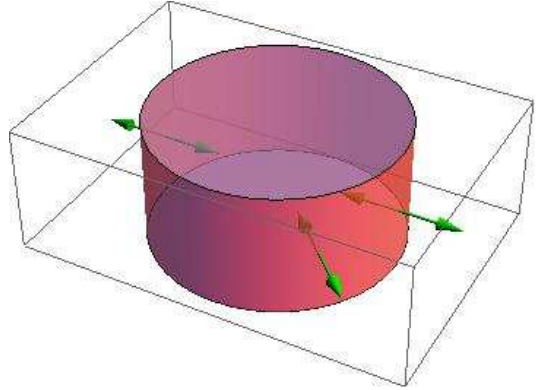}
\caption{The characteristic function of the unit disk
(pink) is equal to 1 for $x^2+y^2\le 1$ to zero for
$x^2+y^2 > 1$. 
Some vectors of the wavefront set are
indicated as green arrows.
For a given point $(x,y)$ of the boundary
$x^2+y^2=1$, the points $(x,y;k_x,k_y)$ of the wavefront
set are such that $(k_x,k_y)$ is perpendicular to the
boundary, thus $(k_x,k_y)=(\lambda x, \lambda y)$ for 
all $\lambda\not=0$. 
In this figure we represent the characteristic function,
the tangent bundle and the cotangent bundle in the
same coordinates.
  \label{figdisk}}
\end{center}
\end{figure}
Notice that the vectors $k$ are perpendicular to the boundary
$S$ of $\Omega$ (see fig.~\ref{figdisk} for the 
example of a disk). This can be understood by a hand-waving
argument. Since the boundary $S$ is smooth, by using
a test function with very small support around $x\in S$,
the boundary looks flat around $x$ and we can apply the argument of
the upper-half plane (generalized to $\bbR^n$) previously discussed.
The set of vectors $k$ 
which are perpendicular to all tangent vectors
to $S$ at $x$ is called the
\emph{conormal} of $S$ at $x$ and is denoted
by $C_x$ (see fig.~\ref{figdisk} for the example
where $n=2$ and $\Omega$ is the unit disk).
The set $C=\{(x,k)\telque k\in C_x\}$
is called the \emph{conormal bundle} of $S$.
The previous proposition says that the wavefront set
of $u_\Omega$ is the conormal bundle of $S$.

The wavefront set of a characteristic distribution
has many applications. Its ability to 
give an accurate description of the boundary of shapes
makes it particularly efficient
for image analysis~\cite{Candes-05}
and tomography~\cite{Xu-09}.

\subsection{Counting intersections}
We close this section by showing that the wavefront set of the characteristic
distribution of a bounded smooth domain $\Omega$ in the plane can be determined by
the following striking procedure. For each straight line $L_{k,a}$ in the plane,
denote by $n_{k,a}$ the number of times the straight line intersects the
boundary. 
\begin{figure}
\begin{center}
\includegraphics[width=7.0cm]{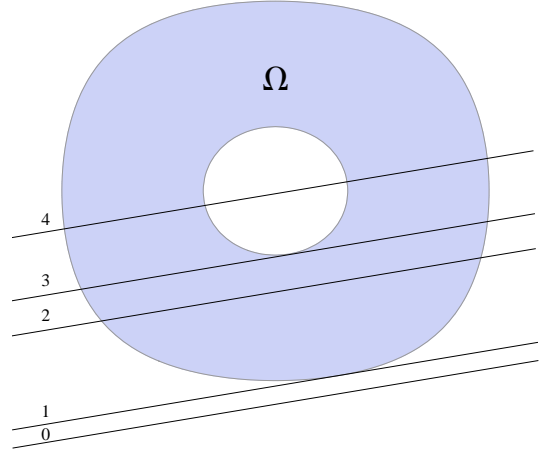}
\caption{Counting the number of times a straight line
crosses the boundary of $\Omega$.
  From the bottom
  to the top, this number is 0, 1, 2, 3 and 4.
It is possible to reconstruct $\Omega$ from the set of straight lines
and their numbers of crossings.
  \label{toporadonfig}}
\end{center}
\end{figure}
For generic domains, the wavefront set of $u_\Omega$ can be
recovered from the set of integers $n_{k,a}$~\cite{Viet-Radon}.
In particular, this information is sufficient to recover the shape of $\Omega$.
This remark can have applications in  image analysis.

In some exceptional cases, this result holds only up to localization or
the replacement of the number of intersections by the number of 
connected parts of the intersection~\cite{Viet-Radon}.
This characterization of the wavefront set can
be extended to surfaces in $\mathbb{R}^3$ 
if we replace the number of intersections $n_{k,a}$
by the Euler characteristic of the intersection of a given surface
with all possible half--spaces~\cite{Viet-Radon}.

\section{Use of the Radon transform}

\subsection{The wavefront set of a measure supported by a hypersurface}
In an attempt to better understand the wavefront set, we came up
with the following idea. 
As seen in Example c) in Section \ref{inwhichcases}, 
a distribution may be singular and may enjoy partial regularity properties
simultaneously. Consider for instance a smooth
submanifold $\Gamma\subset \mathbb{R}^n$ and the distribution which is the measure
$\mu$ supported by $\Gamma$ with the Euclidean density.
The singular character of $\mu$
shows up by restricting $\mu$ to a smooth path which crosses transversally
$\Gamma$: this gives us a Dirac mass type singularity. However if we probe $\mu$ by
moving in a parallel to $\Gamma$ we may be tempted to say that heuristically the distribution varies
smoothly.
Such a test cannot be performed by following a path which lies inside $\Gamma$,
because the restriction of $\mu$ to such a path would not make sense!
However we may replace such a path by a dual wave. In the most naive approach, this consists in
a family of hypersurfaces $(H_t)_t$ which cross transversally 
(e.g. orthogonally) our path
and which forms locally a foliation of an open subset
of $\mathbb{R}^n$. Each $H_t$ can be thought as a wavefront in this Huygens type picture.
This is another indication that we must interpret $p$ as a covector.

Let's explore this idea in the simple case where $\Gamma$ is a smooth curve. Choose a point $x_0\in \Gamma$
and a covector $p\in \mathbb{R}^n$, and define the linear form $\alpha:\mathbb{R}^n\longrightarrow \mathbb{R}$
by $\alpha(x):= p\cdot x$ and assume that
$\alpha|_{T_{x_0}\Gamma}\neq 0$.
We will test $\mu$ locally around $x_0$ by using a plane wave whose wavefronts
are the hyperplanes $H_{\alpha,a}$ of equation $\alpha(x)=a$, for
$a\in \mathbb{R}$ close to $\alpha(x_0)$. 
Choose an open neighborhood $U\subset\mathbb{R}^n$ of $x_0$ such that
there exists a parametrization $\gamma:I\longrightarrow U$
of $\Gamma\cap U$. Then for any $\varphi\in \mathcal{D}(\mathbb{R}^n)$
with support contained in $U$, we have
\[
 \langle \mu,\varphi\rangle = \int_I\varphi(\gamma(t))|\dot{\gamma}(t)|dt.
\]
Moreover we may choose $U$ such that $\alpha|_{T_x\Gamma}\neq 0$, $\forall x\in \Gamma\cap U$.
We remark then that  $\alpha\circ\gamma$ is a diffeomorphism into its image.

Let $\omega$ be an open subset of $U$ such that $\overline{\omega}\subset U$ and let $\chi\in \mathcal{D}(\mathbb{R}^n)$
with support contained in $U$ and such that $\chi=1$ on $\overline{\omega}$. Let 
$f\in \mathcal{D}(\mathbb{R})$ with support in $\alpha(\omega\cap \Gamma)$. Set $\varphi:= \chi(f\circ \alpha)$
and observe that $f\circ \alpha = \varphi$ on $U\cap \Gamma$.
Hence we can define $\langle \mu,f\circ \alpha\rangle$ by setting
\[
 \langle \mu,f\circ \alpha\rangle:= \langle \mu,\varphi\rangle
 = \int_If\circ \alpha\circ\gamma(t)|\dot{\gamma}(t)|dt.
\]
By performing the change of variable $a=\alpha\circ\gamma(t)$, $da = |\alpha(\dot{\gamma}(t))|dt$,
$A = \alpha\circ \gamma(I)$, we obtain
\[
 \langle \mu,f\circ \alpha\rangle = \int_A f(a)\frac{da}{|\alpha(\tau(a))|},
\]
where $\tau(a)$ is the tangent vector to $\Gamma$: $\tau(\alpha\circ \gamma(t)) = \dot{\gamma}(t)/|\dot{\gamma}(t)|$.
We see that we can extend this definition by replacing $f$ by a Dirac mass $\delta_a$ at some value
$a\in A$. We then get $\langle \mu,\alpha^*\delta_a\rangle = 1/|\alpha(\tau(a))| = 
1/| p\cdot \tau(a)|$, a smooth function of $a$. However it appears clearly that this quantity
becomes singular when $\alpha(\tau(a)) = p\cdot\tau(a)=0$: this corresponds to points
of $\Gamma$ such that $T_x\Gamma$ is contained in the kernel of $\alpha$.

Note that we may replace $\alpha$ by $\widetilde{\alpha}(x) = \widetilde{p}\cdot x$,
for $\widetilde{p}\in \mathbb{R}^n$ close to $p$: by choosing $U$ suitably we can show
that the previous computation remains valid for $(\widetilde{\alpha},a)$ close to
$(\alpha,\alpha(x_0))$.
Geometrically $\langle \mu,\widetilde{\alpha}^*\delta_a\rangle$ corresponds to the integral of $\mu$ on
the hyperplane $H_{\widetilde{\alpha},a}$ (more precisely a neighborhood in $H_{\widetilde{\alpha},a}$ of $x_0$),
i.e. the value of the local Radon transform at this hyperplane.

\subsection{The Radon transform of the characteristic function 
of the half-plane}
\label{presentation-Radon}
We go back to the distribution $u$ introduced in Section \ref{half-plane}, i.e.
the characteristic function of the upper half-plane $\Omega$ in $\mathbb{R}^2$.
Any half-line $\{(x,\lambda k)|\lambda \in (0,+\infty)\}$
in the wavefront set of $u$ is characterized by a 
point $x$ and a unit direction $k$. Consider a straight line perpendicular to $k$
and move it along $k$. Then, something should happen to 
the restriction of $u$ to the line when
the line crosses the point $x$.
To be more precise, 
consider a straight line $L_{k,a}$ defined by the equation
$k\cdot x=a$ (the line perpendicular to $k$ that goes through
the point $(a/k_1,0)$ if $k_1\not=0$).  By changing the value of
$a$ we move the line along $k$. 
The integral of $fu$ over the line $L_{k,a}$
is $\mathcal{R}(fu)(k,a)=\int_{L_{k,a}\cap \Omega}f(x)d\ell$ is the value of
the Radon transform of $fu$ at $(k,a)$.
Let us check in this case the result which will be proved
in section~\ref{Radonsect}, i.e. that the wavefront
set of $u$ can be obtained by looking at the
points where the Radon transform
$\mathcal{R}(fu)(k,a)=\int_{L_{k,a}\cap \Omega}f(x)d\ell$ is not a smooth
function of $a$. 
This means here that if a line $L_{k,a}$ is not parallel to the
step, then a small variation of $a$ is smooth,
while it will jump at the step if $L_{k,a}$ is the $x_1$ axis
(see Fig.~\ref{figradon}).
\begin{figure}
\begin{center}
\includegraphics[width=7.0cm]{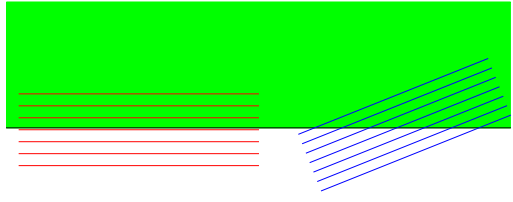}
\caption{ The upper half-plane is green. An integration over
  the blue lines (which are not parallel to the edge) gives a
  smooth function of the distance from the first line.
  An integration over the red lines (parallel to the edge)
  jumps when the line reaches the edge. 
  \label{figradon}}
\end{center}
\end{figure}
To prove this, let $k$ be a unit vector and $v$ a unit vector
perpendicular to $k$. Then the points of the straight line $L_{k,a}$ 
are $x=a k+ t v$ and
\begin{eqnarray*}
\mathcal{R}(fu)(k,a) &=& \int_{-\infty}^\infty f(a k + t v) \theta(a e_2\cdot k+
  t e_2\cdot v) dt,
\end{eqnarray*}
where $e_2$ is the unit vector along the $x_2$ axis.
If we choose an angle $-\pi/2 < \phi \le \pi/2$ such that 
$k=e_1\sin\phi+e_2\cos\phi$ and
$v=e_1\cos\phi-e_2\sin\phi$ (where $e_1$ is the unit vector along the $x_1$ axis) we obtain
$\mathcal{R}(fu)(k,a) = \int_{-\infty}^\infty f(a k  + t v) \theta(a \cos\phi -t \sin\phi)dt$.
We must consider three cases. 
\begin{eqnarray*}
\mathcal{R}(fu)(k,a) &=& \int_{a\cotg\phi }^\infty f(a k  + t v) dt\text{ if } \phi >0, \\
\mathcal{R}(fu)(k,a) &=& \theta(a) \int_{-\infty}^\infty f(a e_2  + t e_1) dt\text{ if } \phi=0\text{ and } a\not=0, \\
\mathcal{R}(fu)(k,a) &=& \int_{-\infty}^{a\cotg\phi } f(a k  + t v) dt\text{ if } \phi <0.
\end{eqnarray*}
We indeed see that $\mathcal{R}(fu)(e_2,a)$ jumps from $0$ for $a=0^-$ to
$\int_{-\infty}^\infty f(te_1) dt$ for $a=0^+$.

\subsection{The wavefront set up to sign and the Radon transform}
\label{Radonsect}
Let us start with the following definition:
\begin{dfn}
 For any distribution $u$ the \emph{wavefront set up to a sign}
 of $u$ is the set
 \[
  \hbox{WF}^{\pm}(u):= \{(x,p)|(x,p)\in \hbox{WF}(u)\hbox{ or }(x,-p)\in \hbox{WF}(u)\}.
 \]
\end{dfn}
This notion is slightly coarser than the wavefront set. However it gives interesting
information about its geometry. Note that $T^*M\setminus\hbox{WF}^{\pm}(u)$ is the set
of \emph{absolutely regular directed points}. These are the points $(x,p)$
such that there exists $f\in \mathcal{D}(\mathbb{R}^n)$ satisfying
$f(x) =1$ and a closed conic 
neighborhood $V\subset \mathbb{R}^n$ 
of $p$ such that $\widehat{fu}$ is fast decreasing on $V\cup (-V)$. The set
$\hbox{WF}^{\pm}(u)$ or equivalently its complementary set
can be characterized by using the Radon transform.

Radon transform is defined by averaging functions on affine subspaces. Here
we use affine hyperplanes of $\mathbb{R}^n$. First consider the case of a
continuous function with compact support $u\in \mathcal{C}^0_c(\mathbb{R}^n)$.
For any $(\nu,s)\in S^{n-1}\times \mathbb{R}$, let $H_{\nu,s}$ be the hyperplane
of equation $\nu\cdot x = s$ and set 
\[
 \mathcal{R}(u)(\nu,s):= \int_{H_{\nu,s}}u(x)d\sigma(x),
\]
where $\sigma$ is the Lebesgue measure on $H_{\nu,s}$. This defines a function $\mathcal{R}(u)$
on $S^{n-1}\times \mathbb{R}$, the \emph{Radon transform} of $u$. This function is linked
to the Fourier transform of $u$ by
\begin{eqnarray*}
 \widehat{u}(p) &=& \int_{\mathbb{R}^n}dx\, e^{ip\cdot x}u(x) =
 \int_{\mathbb{R}}ds\,e^{is|p|}\mathcal{R}(u)\left(p/|p|,s\right)
\\&=&
  \mathcal{F}\left(\mathcal{R}(u)\left(\frac{p}{|p|},\cdot\right)\right)(|p|),
\end{eqnarray*}
hence conversely
\[
 \mathcal{R}(u)(\nu,s) = \frac{1}{2\pi}\int_{\mathbb{R}}dk\,e^{-iks}\widehat{u}(k\nu).
\]
Now consider a distribution $u\in \mathcal{D}'(\mathbb{R}^n)$, let $(x,p)$ be an absolutely
regular directed point of $u$. Let $f\in \mathcal{D}(\mathbb{R}^n)$ such that
$f(x) =1$ and a closed conic neighborhood $V\subset \mathbb{R}^n$ 
of $p$ such that $\widehat{fu}$ is fast decreasing on $V$. 
For any $\nu\in V\cap S^{n-1}$, $k\longmapsto\widehat{fu}(k\nu)$ is a smooth fast decreasing function
of $k\in \mathbb{R}$. We can thus define its inverse Fourier transform and set
\[
 \mathcal{R}(fu)(\nu,s) = \frac{1}{2\pi}\int_{\mathbb{R}}dk\,e^{-iks}\widehat{fu}(k\nu).
\]
Note, for any fixed $\nu$, $s\longmapsto \mathcal{R}(fu)(\nu,s)$ has a compact support
because $f$ has a compact support.
Since $\forall N\in \mathbb{N}$, $\exists C_n>0$ such that $\forall q\in V$,
$|\widehat{fu}(q)|\leq C_N(1+|q|)^N$, it implies that, $\forall m\leq N-2$,
$\forall \nu \in V\cap S^{n-1}$,
$\left|\frac{d^m}{(ds)^m}\mathcal{R}(fu)(\nu,s)\right| \leq C'C_N$, for 
$C' = \frac{1}{2\pi}\int_\mathbb{R} \frac{dk}{(1+|k|)^2}$. Hence $\mathcal{R}(fu)$
is uniformly smooth in $s$ on $(V\cap S^{n-1})\times \mathbb{R}$.

Conversely let $u$ be a distribution and assume that, for some $(x,\nu)\in \mathbb{R}^n\times S^{n-1}$,
there exists $f\in \mathcal{D}(\mathbb{R}^n)$ and a closed neighborhood 
$V\cap S^{n-1}$ of
$\nu$ in $S^{n-1}$ such that we can make sense of the Radon transform $\mathcal{R}(fu)$ of $fu$ on
$(V\cap S^{n-1})\times \mathbb{R}$ (e.g. by proving that there exists a sequence $(fu)_\varepsilon$ of smooth
functions with compact support which converges to $fu$ in $\mathcal{D}'(\mathbb{R}^n)$ and
that the sequence $\mathcal{R}((fu)_\varepsilon)$ converges also in 
$\mathcal{D}'(S^{n-1}\times \mathbb{R})$
to a distribution which we call $\mathcal{R}(fu)$).
Conversely let $u$ be a distribution and observe that, for any 
$f\in \mathcal{D}(\mathbb{R}^n)$ and any closed neighborhood 
$V\cap S^{n-1}$ of $\nu$ in $S^{n-1}$, we can make sense of the Radon 
transform $\mathcal{R}(fu)$ of $fu$ on
$(V\cap S^{n-1})\times \mathbb{R}$, e.g. by noticing that $fu$ is a compactly 
supported distribution, thus $\widehat{fu}$ is real analytic with polynomial 
growth by Paley--Wiener--Schwartz, therefore the restriction 
$\widehat{fu}(k\nu)$ is analytic with polynomial growth in $k\in\mathbb{R}$ 
uniformly in $\nu\in V\cap\mathbb{S}^{n-1}$, 
hence a tempered distribution in $\mathcal{S}^\prime(\mathbb{R})$. 
Its inverse Fourier transform $\mathcal{R}(fu)(\nu,s)$
is thus a tempered distribution in $s$.).
Assume moreover, $\forall N\in \mathbb{N}$, $\exists \Gamma_N>0$ such that $\forall \eta\in V$,
$\left|\frac{d^m}{(ds)^m}\mathcal{R}(fu)(\nu,s)\right| \leq \Gamma_N$. Then, since 
$\forall \eta\in V\cap S^{n-1}$, $s\longmapsto \mathcal{R}(fu)(\eta,s)$
is compactly supported we can define its Fourier transform in $s$ and set
$\widehat{fu}(p) = \int_\mathbb{R}ds\, e^{i|p|s}\mathcal{R}(fu)(p/|p|,s)$,
$\forall p\in V$. It follows
then that $|\widehat{fu}(p)|\leq \Gamma_N|\hbox{supp} \mathcal{R}(fu)(p/|p|,\cdot)||p|^{-N}$
and hence $\widehat{fu}$ is fast decreasing in $V$.
As a conclusion:
\begin{thm}
 Let $u\in \mathcal{D}'(U)$ be a distribution and $(x,k)\in U\times (\mathbb{R}^n\setminus\{0\})$.
 Then $(x,k)$ does not belong to $WF^{\pm}(u)$ iff there exists $f\in \mathcal{D}(U)$
 such that $\mathcal{R}(fu)$ is smooth on a neigborhood of $(k/|k|,k\cdot x/|k|)$
 in $U\times (V\cap S^{n-1})$.
\end{thm}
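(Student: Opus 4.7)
The plan is to prove both directions of the equivalence by combining the two analyses already carried out in the preceding paragraphs around the Fourier--slice formula
\[
\widehat{fu}(p) = \int_\mathbb{R} e^{i|p|s}\,\mathcal{R}(fu)(p/|p|, s)\, ds.
\]

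For the forward direction, assuming $(x,k)\notin\WF^{\pm}(u)$, I would pick $f\in\mathcal{D}(U)$ with $f(x)=1$ and a closed conic neighborhood $V$ of $k$ such that $\widehat{fu}$ is fast decreasing on $V\cup(-V)$. Then for $\nu$ in the spherical cross--section of $V$, one has $t\nu\in V$ when $t>0$ and $t\nu\in -V$ when $t<0$, so the formula
\[
\mathcal{R}(fu)(\nu,s)=\frac{1}{2\pi}\int_\mathbb{R}e^{-its}\widehat{fu}(t\nu)\,dt
\]
converges absolutely; differentiation under the integral sign, justified by the uniform fast--decrease bounds on $\widehat{fu}$ in all directions of $V\cup(-V)$, yields smoothness of $\mathcal{R}(fu)$ in $(\nu,s)$ on a neighborhood of $(k/|k|, k\cdot x/|k|)$.

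For the converse, starting from $f\in\mathcal{D}(U)$ with $f(x)=1$ and a neighborhood $W=(V\cap S^{n-1})\times J$ of $(k/|k|, k\cdot x/|k|)$ on which $\mathcal{R}(fu)$ is smooth, the first key step is to shrink the support of $f$ around $x$ so that, for every $\nu$ in the angular part of $W$, the slice $\mathcal{R}(fu)(\nu,\cdot)$ is compactly supported inside $J$; this is possible because the hyperplane $H_{\nu,s}$ misses $\supp f$ unless $s$ is close to $\nu\cdot x$. Once this cut--off is achieved, smoothness on $W$ together with the compactness of the $(\nu,s)$-support of $\mathcal{R}(fu)$ provides uniform bounds $|\partial_s^m\mathcal{R}(fu)(\nu,s)|\leq \Gamma_m$ for every $m$. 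Substituting into the inversion formula for $\widehat{fu}$ and integrating by parts $N$ times in $s$ produces $|\widehat{fu}(p)|\leq C_N|p|^{-N}$ on a conic neighborhood of $k$; since $H_{-\nu,-s}=H_{\nu,s}$ forces $\mathcal{R}(fu)(-\nu,-s)=\mathcal{R}(fu)(\nu,s)$, smoothness near $(k/|k|, k\cdot x/|k|)$ automatically carries over to a neighborhood of $(-k/|k|, -k\cdot x/|k|)$, so the fast decrease holds on both $V$ and $-V$ and hence $(x,k)\notin\WF^{\pm}(u)$.

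The main obstacle I anticipate is precisely this matter of the $s$-support in the converse direction: the hypothesis only delivers smoothness of $\mathcal{R}(fu)$ on a local neighborhood of $(k/|k|, k\cdot x/|k|)$, whereas the inversion formula integrates $s$ over all of $\mathbb{R}$. The localization step---shrinking $\supp f$ until the $s$-support of $\mathcal{R}(fu)(\nu,\cdot)$ sits inside the window $J$ for all $\nu$ in the angular part of $W$---is what reconciles local smoothness with the global integration, and one must be careful to perform the shrinking so that $f(x)=1$ is preserved and the angular size of $V\cap S^{n-1}$ is not reduced further than needed.
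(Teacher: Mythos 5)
Your proposal is correct and follows the same Fourier--slice approach as the paper, using $\widehat{fu}(p)=\int_\mathbb{R}e^{i|p|s}\mathcal{R}(fu)(p/|p|,s)\,ds$ in both directions. In fact your version is tighter than the paper's own argument in two respects. First, the paper proves the converse by directly positing uniform bounds $|\partial_s^m\mathcal{R}(fu)(\nu,s)|\leq\Gamma_N$ for all $s\in\mathbb{R}$ and all $\nu$ in the spherical cap, whereas the theorem's hypothesis grants only local smoothness near $(k/|k|,\,k\cdot x/|k|)$; your localization step---shrinking $\supp f$ so that the hyperplane $H_{\nu,s}$ misses $\supp f$ unless $s$ lies in the small window $J$---is exactly what reconciles the local hypothesis with the global integration by parts, and the paper leaves this implicit. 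Second, the conclusion $(x,k)\notin\WF^\pm(u)$ demands fast decrease on $V\cup(-V)$, while the integration by parts only delivers it on $V$; your observation that $H_{-\nu,-s}=H_{\nu,s}$ forces $\mathcal{R}(fu)(-\nu,-s)=\mathcal{R}(fu)(\nu,s)$, and hence transfers the smoothness (and the resulting decay) to $-V$, is a genuinely necessary step that the paper omits. The one place you should be more explicit is the forward direction: to claim joint smoothness of $\mathcal{R}(fu)$ in $(\nu,s)$ (not merely uniform smoothness in $s$, which is all the paper establishes) by differentiating under the integral sign, you also need fast decrease of the angular derivatives of $\widehat{fu}$ on a slightly shrunken cone; this is standard (multiplication of $fu$ by a smooth compactly supported function such as $x_j f$ preserves fast decrease on interior closed cones), but deserves a sentence.
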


\section{Oscillatory integrals}\label{oscillatory}
In proposition~\ref{exdisk}, the singular support of the
characteristic distribution is the submanifold $S$.
H\"ormander gives another example of a distribution where the singular support
is a submanifold~\cite[p.~261]{HormanderI}.
This example is important because it exhibits a distribution
defined by an oscillatory integral (as the Wightman
propagator).
\begin{example}
Let $M$ be a smooth submanifold of $\mathbb{R}^n$
defined near a point $x_0\in M$ by 
$\phi_1(x)=\dots = \phi_k(x)=0$ where
$d \phi_1,\dots, d \phi_k$ are linearly independent at $x_0$.
If the function $a\in\calD(\bbR^n)$ has support near $x_0$, 
we define the distribution
$\langle u,f\rangle=(2\pi)^k 
 \int d x a(x) \delta(\phi_1,\dots,\phi_k) f(x)$, 
where
$\delta$ is the delta function in $\mathbb{R}^k$.
This can be rewritten
\begin{eqnarray*}
\langle u,f\rangle &=&  \int_{\bbR^n} dxf(x)
\int_{\bbR^k} d\xi \, a(x) e^{i \phi(x,\xi)},
\end{eqnarray*}
where $\phi(x,\xi)=\sum_{i=1}^k \phi_i(x) \xi_i$ and
$\xi_i\in\mathbb{R}$.
Then $WF(u) = \{(x,-d_x\phi(x,\xi));
\phi_1(x)=\dots=\phi_k(x)=0,x\in\supp a\}$,
where
\begin{eqnarray*}
d_x\phi(x,\xi) &=&
\frac{\partial \phi(x,\xi)}{\partial x_1} d x_1 + \dots
+ \frac{\partial \phi(x,\xi)}{\partial x_n} d x_n.
\end{eqnarray*}
\end{example}
We can use this result to recover the wavefront set of
example~\ref{exdisk} when $n=2$, $\Omega$
is the unit disk and $S$ is the unit circle. 
We have a single function
$\phi_1(x_1,x_2)=x_1^2+x_2^2-1$, so that
$\phi(x,\xi)= (x_1^2+x_2^2-1)\xi$, 
the critical set is given by
$d_\xi\phi(x,\xi)=\phi_1(x_1,x_2)=0$
and
$d_x \phi(x,\xi)= (2 x_1 d x_1 +2 x_2 dx_2)\xi$.
If we switch to polar coordinates, we obtain 
$d_x \phi(x,\xi)= 2 \rho \xi d\rho $,
which is a direction perpendicular to the unit circle
at $x$.  Note that $\xi$ can have both signs, thus
both $d\rho$ and $-d\rho$ belong to the wavefront set.
This example confirms an important characteristics of
the wavefront set. The direction $k$ are not vectors
but covectors. Indeed,
$d_x\phi(x,\xi)$ can be expanded over the
(covector) basis $d x_1,\dots,d x_n$ of $T^*_xM$ 
and not over the vector basis
$\partial_{x_1},\dots,\partial_{x_n}$ of $T_xM$.
To determine the nature of the directions
$k$ in the wavefront set, we can also look at the
way the wavefront set transforms under a smooth mapping
$\mathbb{R}^n\to\mathbb{R}^n$. 
The detailed calculation~\cite[p.~195]{Strichartz-03}
confirms that $k$ are covectors because they transform
covariantly.
This point is important for distributions
on manifolds.

The previous result can be extended to
more generaly oscillatory integrals (in the following we always assume
that the \emph{phase function}
$\phi$ is homogeneous of degree $1$, i.e.
$\phi(x,\lambda \xi) = \lambda \phi(x,\xi)$, $\forall \lambda>0$, 
see~\cite[p.~260]{HormanderI} for details):
\begin{thm}
\label{phasethm}
If a distribution $u$ is defined by an oscillatory integral
\begin{eqnarray*}
u(f) &=&  \int_{\mathbb{R}^n} dx f(x)
\int_{\mathbb{R}^s} d\xi \, a(x,\xi) e^{i \phi(x,\xi)} d \xi,
\end{eqnarray*}
where $\phi$ is a phase function 
and $a$ an asymptotic symbol, then
$WF(u)\subset\{(x;-d_x\phi(x,\xi))\,|\,
d_\xi\phi(x,\xi)=0\}$.
\end{thm}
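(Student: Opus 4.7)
The plan is to apply the method of non-stationary phase to the combined phase $\Psi(x,\xi;k)=k\cdot x+\phi(x,\xi)$ appearing in the double oscillatory integral
\begin{equation*}
\widehat{fu}(k) \;=\; \int_{\bbR^n}\!\!\int_{\bbR^s} f(x)\,a(x,\xi)\,e^{i\Psi(x,\xi;k)}\,d\xi\,dx,
\end{equation*}
for any test function $f\in\calD(\bbR^n)$ with $f(x_0)=1$ supported in a small neighborhood of a chosen point $x_0$. Joint critical points of $\Psi$ in $(x,\xi)$ satisfy $k=-d_x\phi(x,\xi)$ and $d_\xi\phi(x,\xi)=0$, which is precisely the set appearing in the conclusion. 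The strategy is to show that if $(x_0,k_0)$ lies \emph{outside} that set, then $\Psi$ has no critical point in $(x,\xi)$ for $x$ near $x_0$ and $k$ in a conic neighborhood of $k_0$, and integration by parts then yields rapid decay of $\widehat{fu}$.

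The first step is the key a priori lower bound: if $(x_0,k_0)\notin\{(x;-d_x\phi(x,\xi))\telque d_\xi\phi(x,\xi)=0\}$, then there exist a closed conic neighborhood $V$ of $k_0$, a neighborhood $W$ of $x_0$, and a constant $c>0$ such that for all $x\in W$, $k\in V$, and $\xi\in\bbR^s$,
\begin{equation*}
|k+d_x\phi(x,\xi)|^2 + (1+|\xi|^2)\,|d_\xi\phi(x,\xi)|^2 \;\geq\; c\,(1+|k|^2+|\xi|^2).
\end{equation*}
Since $\phi$ is homogeneous of degree $1$ in $\xi$, $d_x\phi$ is degree $1$ and $d_\xi\phi$ is degree $0$, so it suffices to check this on the unit sphere in $\xi$ together with the unit sphere of $k$-directions. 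The hypothesis rules out simultaneous vanishing of both quantities at the base point; continuity then yields the uniform bound on a full neighborhood, and homogeneity propagates it to all $|\xi|$.

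I would then introduce the first-order differential operator
\begin{equation*}
L \;=\; \frac{1}{i\Lambda}\!\left(\sum_{j=1}^n(\partial_{x_j}\Psi)\,\partial_{x_j} + (1+|\xi|^2)\sum_{j=1}^s(\partial_{\xi_j}\Psi)\,\partial_{\xi_j}\right),\qquad \Lambda=|\nabla_x\Psi|^2+(1+|\xi|^2)|\nabla_\xi\Psi|^2,
\end{equation*}
which satisfies $L\,e^{i\Psi}=e^{i\Psi}$ wherever $\Lambda>0$; by the previous step, this is the entire region of integration as soon as $k\in V$. Iterating $N$ times, the transpose $L^t$ is moved onto the amplitude, yielding
\begin{equation*}
\widehat{fu}(k) \;=\; \int\!\!\int (L^t)^N\bigl[f(x)\,a(x,\xi)\bigr]\,e^{i\Psi}\,dx\,d\xi,
\end{equation*}
and each application of $L^t$ is designed to extract a factor of order $(1+|k|+|\xi|)^{-1}$ from the amplitude, uniformly in $k\in V$.

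The main obstacle is the careful bookkeeping in this last step: one must verify that the derivatives produced by $L^t$ acting on $1/\Lambda$ and on the factors $\partial_{x_j}\Psi$, $\partial_{\xi_j}\Psi$ still contribute a gain of order $(1+|k|+|\xi|)^{-1}$ after each application. This relies on differentiating the lower bound for $\Lambda$, on the symbol estimates $|\partial_x^\alpha\partial_\xi^\beta a|\leq C_{\alpha\beta}(1+|\xi|)^{m-|\beta|}$, and on tracking the homogeneity degrees of the successive derivatives of $\phi$. Once this is in place, for $N$ larger than $m+s+N_0$ the integrand is absolutely integrable and one obtains $|\widehat{fu}(k)|\leq C_{N_0}(1+|k|)^{-N_0}$ for every $N_0$ and every $k\in V$, showing that $(x_0,k_0)$ is a regular directed point and therefore does not lie in $\WF(u)$.
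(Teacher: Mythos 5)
Your proposal is essentially H\"ormander's proof of Theorem~8.1.9 of~\cite{HormanderI}; the paper itself gives only the heuristic stationary-phase sketch of the critical equations $k+d_x\phi=0$, $d_\xi\phi=0$ and refers to the literature, so you are supplying a genuine proof where the paper offers none. Your structure --- a uniform lower bound for $\Lambda=|\nabla_x\Psi|^2+(1+|\xi|^2)|\nabla_\xi\Psi|^2$ over a small neighbourhood of $x_0$ and closed cone around $k_0$ disjoint from $\{(x,-d_x\phi):d_\xi\phi=0\}$, followed by iterated transposition of $L$ with $Le^{i\Psi}=e^{i\Psi}$ --- is correct, and the weight $(1+|\xi|^2)$ on the $\xi$-derivatives is exactly what the homogeneity mismatch demands: $d_x\phi$ is degree~$1$ while $d_\xi\phi$ is degree~$0$, so this weight makes both $\Psi_{x_j}/\Lambda$ and $(1+|\xi|^2)\Psi_{\xi_j}/\Lambda$ bounded by $(1+|k|+|\xi|)^{-1}$ on the good cone. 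It would be worth stating that the compactness argument at the endpoint where $d_\xi\phi\to 0$ uses the non-degeneracy of a phase function, namely that $d_x\phi$ and $d_\xi\phi$ never vanish simultaneously; otherwise the scaled quantity $G(x,\eta,\kappa,\theta)=|\cos\theta\,\kappa+\sin\theta\,d_x\phi|^2+\sin^2\theta\,|d_\xi\phi|^2$ could vanish at $\theta=\pi/2$ and the lower bound would fail.

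There is one genuine gap: you assert the lower bound ``for all $\xi\in\mathbb{R}^s$'', but a phase function is positively homogeneous in $\xi$ and therefore not differentiable at $\xi=0$; $d_\xi\phi$, and hence $\Lambda$ and $L$, are simply not defined there. The fix is standard but not optional. Introduce a cut-off $\chi(\xi)$ vanishing for $|\xi|\le 1/2$ and equal to~$1$ for $|\xi|\ge 1$; the contribution $\int(1-\chi)\,a\,e^{i\phi}\,d\xi$ is an integral over a bounded $\xi$-region on which $a$ and $\phi$ are smooth, hence a smooth function of $x$ contributing nothing to $\WF(u)$. Your operator $L$ and the lower bound are then applied only on $\supp\chi$, where $|\xi|\ge 1/2$, and the rest of the argument goes through: after $N>m+s+N_0$ iterations, splitting $(1+|k|+|\xi|)^{-N}\le(1+|k|)^{-N_0}(1+|\xi|)^{N_0-N}$ gives $|\widehat{fu}(k)|\le C(1+|k|)^{-N_0}$ uniformly on the closed cone $V$, as desired.
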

We refer to the literature for a precise definition of phase functions 
and asymptotic symbols~\cite[p.~236]{HormanderI}\cite[p.~99]{ReedSimonII}.
We can give a hand-waving argument to understand the origin of
this wavefront set. The Fourier transform of 
$u$ is given by 
$\int dx \int d\xi a(x,\xi) e^{i\phi(x,\xi)+i k\cdot x}$.
By using the stationary-phase method, we see that 
the directions of slow decrease are the directions 
where the phase $\phi(x,\xi)+k\cdot x$ is critical with respect to $(x,\xi)$. They
are determined by the equations
$k+d_x\phi(x,\xi)=0$ and $d_\xi\phi(x,\xi)=0$.

\subsection{The Wightman propagator}
With the help of theorem~\ref{phasethm}
we can calculate the wavefront set of
a fundamental distribution of quantum field theory:
the Wightman propagator in Minkowski spacetime~\cite[p.~66]{ReedSimonII}
\begin{eqnarray*}
W_2(f,g) &=& \int_{\mathbb{R}^3\times \mathbb{R}^3} dx d y 
\langle 0|\varphi(x) \varphi(y) |0\rangle
f(x) g(y)
\\&=&
-i \int_{\mathbb{R}^3\times \mathbb{R}^3} dx dy 
\Delta_+(x-y)
f(x) g(y),
\end{eqnarray*}
where~\cite[p.~70]{ReedSimonII}
\begin{eqnarray*}
\Delta_+(x) &=&
\frac{i}{2(2\pi)^3}
\int_{\mathbb{R}^3} \frac{d\bfk}{k_0}
e^{-i k\cdot x},
\end{eqnarray*}
with $\bfk=(k_1,k_2,k_3)$, $k_0=\sqrt{|\bfk|^2+m^2}$,
$k = (k_\mu) = (k_0,\bfk)$ and $k\cdot x = \sum_{\mu=0}^3k_\mu x^\mu$.

The analytic form of 
$\Delta_+(x)$ 
is given by Scharf~\cite[p.~90]{Scharf}.
We can write $\Delta_+(f)$ in the oscillatory integral
form of Theorem~\ref{phasethm} by setting, 
for $\xi\in \mathbb{R}^3$,~\cite[p.~100]{ReedSimonII}
\begin{eqnarray*}
\phi(x,\xi) &=& -x^0 |\xi| 
- x^j \xi_j,\quad
a(x,\xi) = \frac{e^{-ix^0(\omega_\xi
-|\xi|)}}{\omega_\xi},
\end{eqnarray*}
where 
$|\xi|=\sqrt{(\xi_1)^2+(\xi_2)^2+(\xi_3)^2}$,
$\omega_\xi=\sqrt{|\xi|^2+m^2}$
and
$ x^j \xi_j=\sum_{j=1}^3  x^j \xi_j$.
To prove this, just write
\begin{eqnarray*}
-ik\cdot x &=&  -i(x^j k_j + x^0k_0)
=i(-x^0 |\bfk|-x^j k_j)
-ix^0 (k_0 -|\bfk|),
\end{eqnarray*}
and replace $\bfk$ by $\xi$.
The modification of the phase is necessary to
make $a(x,\xi)$ an asymptotic symbol.
We can now calculate the wavefront set of the 
Wightman propagator~\cite[p.~106]{ReedSimonII}
\begin{prop}
\label{exDeltap}
The wavefront set of $\Delta_+$
is
$\WF(\Delta_+)=S_0 \cup S_+ \cup S_-$, where
\begin{eqnarray*}
S_0 &=& 
\{ (0;|\bfk|,\bfk)\,|\, \bfk\in(\mathbb{R}^3\backslash\{0\})\},\\
S_{\pm} &=& 
\{ (\pm|\bfx|,\bfx;\lambda|\bfx|,\mp\lambda \bfx) \,|\,
\bfx\in(\mathbb{R}^3\backslash\{0\}),\lambda>0\}.
\end{eqnarray*}
More compactly~\cite[p.~118]{Strohmaier},
$\WF(\Delta_+)=\{(x;k)\telque k_0=|\bfk|, 
x^0=\lambda k_0, x^i= -\lambda k_i, \lambda\in\mathbb{R}\}$.
\end{prop}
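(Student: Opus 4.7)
The plan is to apply Theorem~\ref{phasethm} to the oscillatory integral representation of $\Delta_+$ set up just before the statement, compute the critical set, match it with the claimed sets, and then deal with the (harder) reverse inclusion.

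First I would verify that the representation
\[
 \Delta_+(f) \;=\; \int dx\, f(x) \int d\xi\, a(x,\xi)\, e^{i\phi(x,\xi)}
\]
with $\phi(x,\xi) = -x^0|\xi| - x^j\xi_j$ and $a(x,\xi)=e^{-ix^0(\omega_\xi-|\xi|)}/\omega_\xi$ fits the hypotheses of Theorem~\ref{phasethm}. The phase $\phi$ is homogeneous of degree $1$ in $\xi$. The amplitude is a classical symbol because $\omega_\xi - |\xi| = m^2/(\omega_\xi + |\xi|) = O(|\xi|^{-1})$, so $a(x,\xi) \sim |\xi|^{-1}$ at infinity with all $\xi$-derivatives gaining an extra power; this is why the modified phase was introduced. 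Thus Theorem~\ref{phasethm} yields
$\WF(\Delta_+)\subset\{(x;-d_x\phi(x,\xi))\,|\,d_\xi\phi(x,\xi)=0,\ \xi\neq 0\}$.

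Next I would carry out the critical point analysis. A direct calculation gives $\partial_{\xi_i}\phi = -x^0\xi_i/|\xi| - x^i$, so the condition $d_\xi\phi=0$ forces $x^i = -x^0\xi_i/|\xi|$ for $i=1,2,3$. Meanwhile $-d_x\phi = |\xi|\,dx^0 + \xi_j\,dx^j$, so the covector $k$ attached to $x$ has $k_0=|\xi|$ and $k_j=\xi_j$, which enforces $k_0 = |\bfk|$ (positive mass-shell). To match the three listed sets I would split on the sign of $x^0$: if $x^0=0$ then $x^i=0$ and any $\bfk\neq 0$ with $k_0=|\bfk|$ is admissible, giving $S_0$; if $x^0>0$ then $|\bfx|=x^0$ and $\xi_i/|\xi| = -x^i/|\bfx|$, and setting $\lambda=|\xi|/|\bfx|>0$ yields $k_0=\lambda|\bfx|$, $k_i=-\lambda x^i$, giving $S_+$; the case $x^0<0$ produces $S_-$ in the same way. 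The compact reformulation $x^0=\lambda k_0$, $x^i=-\lambda k_i$, $\lambda\in\bbR$, $k_0=|\bfk|$ comes from setting $\lambda = x^0/|\xi|$ (with $\lambda=0$ corresponding to $S_0$).

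The main obstacle is the reverse inclusion, since Theorem~\ref{phasethm} only provides a containment. My plan is to use two ingredients. First, $\Delta_+$ satisfies $(\Box+m^2)\Delta_+=0$, so by H\"ormander's propagation of singularities theorem its wavefront set is a union of null bicharacteristics of $p(x,k)=k_0^2-|\bfk|^2-m^2$, which along the positive mass shell $k_0 = |\bfk|$ are exactly the half-lines $t\mapsto (x + t(k_0,-\bfk); k)$ with $k$ fixed --- precisely the lines parameterized by $\lambda$ in the compact form. Second, I would show that $S_0\subset\WF(\Delta_+)$ by a direct check: for a cutoff $f$ with $f(0)=1$ and $\bfk_0\neq 0$, the integrand defining $\widehat{f\Delta_+}(\tau(|\bfk_0|,\bfk_0))$ has a nondegenerate stationary point at $\bfk=\bfk_0$ on the mass shell, so the integral does not decay fast in $\tau$; alternatively one may deduce this from the fact that the imaginary part of $\Delta_+$ is the Pauli--Jordan commutator $\Delta$, whose wavefront set at the origin is well known to be the full light cone of covectors. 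Propagating these initial singularities along the bicharacteristics then fills out exactly $S_+\cup S_-$, proving equality.
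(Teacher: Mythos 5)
Your argument for the inclusion $\WF(\Delta_+)\subset S_0\cup S_+\cup S_-$ is essentially the same as the paper's: cast $\Delta_+$ as an oscillatory integral with phase $\phi(x,\xi)=-x^0|\xi|-x^j\xi_j$, apply Theorem~\ref{phasethm}, and read off $k=-d_x\phi$ on the critical set $\{d_\xi\phi=0\}$; you are a bit more careful than the paper in verifying that $a(x,\xi)=e^{-ix^0(\omega_\xi-|\xi|)}/\omega_\xi$ is an acceptable symbol via $\omega_\xi-|\xi|=m^2/(\omega_\xi+|\xi|)$, and in not dropping the minus sign on $d_x\phi$ (the paper's displayed inclusion has a sign typo). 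Where you genuinely diverge is the reverse inclusion: the paper simply cites \cite[p.~107]{ReedSimonII}, whereas you give an actual plan, namely (i) $(\Box+m^2)\Delta_+=0$, so $\WF(\Delta_+)$ lies in $\Char(\Box+m^2)$ and is invariant under the Hamiltonian flow of the principal symbol by propagation of singularities, and (ii) one shows directly that $S_0\subset\WF(\Delta_+)$, after which the flow sweeps out $S_+\cup S_-$. This is a legitimate and arguably more illuminating route, since it explains the ``straight line'' structure of $\WF(\Delta_+)$ as a propagation phenomenon rather than an accident of the explicit formula.

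Two small corrections to your propagation step. First, the function $p(x,k)=k_0^2-|\bfk|^2-m^2$ that you write down is the \emph{full} symbol, not the principal symbol; the propagation theorem uses the principal symbol $p_2(x,k)=k_0^2-|\bfk|^2$, whose zero set is the momentum light cone $k_0=\pm|\bfk|$, and this (not the genuine mass shell $k_0=\sqrt{|\bfk|^2+m^2}$) is exactly what appears in $\WF(\Delta_+)$. You in fact use the right characteristic set when you write ``$k_0=|\bfk|$'', but calling it the ``mass shell'' and including $m^2$ in $p$ muddles the logic; the key point is precisely that $m^2$ is lower order and drops out. Second, your fallback argument for $S_0\subset\WF(\Delta_+)$ via the Pauli--Jordan function is shaky as stated: $\WF$ does not distribute over real and imaginary parts, since $\WF(\bar u)$ is the reflection $-\WF(u)$, so one cannot conclude $\WF(\Delta)\subset\WF(\Delta_+)$ from $\Delta=2i\,\mathrm{Im}\,\Delta_+$ without more care. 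Your primary stationary-phase argument is the right one and, once the $1/\tau$ decay is established quantitatively, does give $S_0\subset\WF(\Delta_+)$. With the symbol correction and the stationary-phase route made precise, your proof is complete and strictly stronger than the paper's, which leaves the equality to a reference.
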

\begin{figure}
\begin{center}
\includegraphics[width=5.0cm]{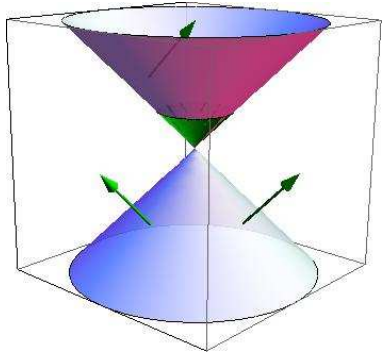}
\caption{Wavefront set of $\Delta_+$: the
wavefront set at the origin is an upper cone.
Note that, in this figure, three different spaces
are identified: the configuration space $\bbR^3$,
the tangent space and the cotangent space 
over each point of the configuration point.
The tangent and cotangent spaces are identified through
the Euclidian metric. This implies that the covectors in
$\WF(u)$ are perpendicular to the tangent planes.
  \label{figdeltaplus}}
\end{center}
\end{figure}
The advantage of the physical convention for the Fourier
transform is that positive energies correspond to
$k_0>0$. 
The wavefront set of $\Delta_+$ for curved (globally hyperbolic)
spacetime is given by Strohmaier~\cite{Strohmaier}.

\begin{proof}
According to Theorem \ref{phasethm}, we first
calculate the set of critical points $\{d_\xi\phi=0\}$
for $\phi(x,\xi)=-x^0\vert\xi\vert-x^i\xi_i$. 
We find 
$ -x^0\left(\sum_{i=1}^3\frac{\xi_id\xi_i}{\vert\xi\vert}\right)
-\sum_{i=1}^3x^id\xi_i=0$, which implies
$x^i=-x^0\frac{\xi_i}{\vert\xi\vert}$ and thus
$x^0=\lambda \vert\xi\vert$ and $x^i=-\lambda\xi_i$ for 
$\lambda=\frac{x^0}{\vert\xi\vert}.$
Conversely, if we plug $x^0=\lambda \vert\xi\vert, x^i=-\lambda\xi_i $ in $d_\xi\phi$
for any $\lambda\in\mathbb{R}$,
we find $d_\xi\phi=\sum_{i=1}^3(\lambda\xi_i-\lambda\xi_i)d\xi_i=0$.
Then Theorem \ref{phasethm} claims that $WF(\Delta_+)$
is a subset of
$\{(x;d_x\phi) \telque d_\xi\phi(x;\xi)=0\}$:
\begin{eqnarray*}
WF(\Delta_+) &\subset &
\{(x^0,\bfx;\vert\xi\vert,\xi_i ) |  x^0=\lambda \vert\xi\vert, x^i=-\lambda\xi_i ,\lambda\in\mathbb{R} \}
\\&\subset&
\{(x^0,\bfx;k_0,k_i ) |  k_0=\vert\bfk\vert, x^0=\lambda k_0, x^i=-\lambda k_i ,\lambda\in\mathbb{R} \}.
\end{eqnarray*}
We leave to the
reader the proof of the decomposition 
$\{(x^0,\bfx;k_0,k_i ) |   k_0=\vert\bfk\vert, x^0=\lambda k_0, x^i=-\lambda k_i ,\lambda\in\mathbb{R} \}= S_0\cup S_+\cup S_-$.

Note that theorem~\ref{phasethm} states only that
$\WF(\Delta_+)\subset S_0\cup S_+\cup S_-$. 
We refer to the literature to show that $\subset$
can be replaced by $=$~\cite[p.~107]{ReedSimonII}.
The singular support of $\Delta_+$ is the light cone
$x^0=\pm |\bfx|$, the cotangent vectors $k$
are light-like, have positive energy $k_0$
and are perpendicular to $x$.
\end{proof}

\subsection{The Feynman propagator}
\label{Feynpropsect}
\begin{prop}
\label{exDeltaF}
The Feynman propagator 
$\Delta_F(x)=\theta(x^0) \Delta_+(x)
+ \theta(-x^0) \Delta_+(-x)$ exists 
and its
wavefront set is
$\WF(\Delta_F)=D^* \cup C_F$, where
$D^*=\{(0;k)\,|\, k\not=0\}$ is the wavefront set of
the Dirac delta function and
$C_F=\{(x;k)\,|\, (x^0)^2 - |\bfx|^2=0, x^0\not=0, 
k_0= \lambda x^0,k_i = - \lambda x^i, \lambda>0\}$.
\end{prop}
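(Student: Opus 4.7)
The plan is to split $\Delta_F = u_+ + u_-$ where $u_+ = \theta(x^0)\Delta_+(x)$ and $u_- = \theta(-x^0)\Delta_+(-x)$, verify existence of each product via H\"ormander's condition, bound the two wavefront sets with the Product Theorem, and then observe a cancellation in the sum that removes spurious contributions along the spacelike hyperplane $\{x^0 = 0\}$.

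\emph{Existence.} Viewed as a distribution on $\mathbb{R}^4$, $\theta(x^0)$ is the tensor product of the one-dimensional Heaviside distribution with the constant function $1$ on $\mathbb{R}^3_\bfx$, so by Example~\ref{Heavisideex} and the standard wavefront formula for tensor products, $\WF(\theta(x^0)) = \{(0,\bfx; k_0,0,0,0) : \bfx \in \mathbb{R}^3,\ k_0 \neq 0\}$. These covectors are purely temporal and concentrated at $x^0 = 0$. By Proposition~\ref{exDeltap}, the singular support of $\Delta_+$ meets $\{x^0 = 0\}$ only at the origin, where $\WF(\Delta_+) = S_0$ consists of covectors $(|\bfk|,\bfk)$ with $\bfk \neq 0$; in particular the spatial part is nonzero. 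Hence no covector in $\WF(\theta(x^0))$ has its negative in $\WF(\Delta_+)$, H\"ormander's condition holds, and $u_+$ is well defined; the analogous argument, symmetric under $x \mapsto -x$, yields the existence of $u_-$.

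\emph{Wavefront bounds.} Since $\supp(\Delta_+) = \mathbb{R}^4$, Theorem~\ref{disprodthm} bounds $\WF(u_+)$ by $\WF(\theta(x^0)) \cup \{(x;k) \in \WF(\Delta_+) : x^0 \geq 0\} \cup T$, where $T$ is the cross-term of the Product Theorem, concentrated at the origin and consisting of sums $(k_0,0,0,0)+(|\bfk|,\bfk)$ with $k_0 \neq 0,\ \bfk \neq 0$. A short inspection shows that these contributions exhaust all nonzero covectors at the origin, yielding $D^*$ there; they reproduce the forward light-cone part of $C_F$; but they also leave an unwanted set $\{(0,\bfx; k_0,0,0,0) : \bfx \neq 0,\ k_0 \neq 0\}$ along the hyperplane $x^0 = 0$. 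A symmetric argument bounds $\WF(u_-)$ by $D^*$ at the origin, the backward light-cone part of $C_F$, and the same unwanted hyperplane set.

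\emph{Cancellation and conclusion.} To remove these spurious covectors from $\WF(u_+ + u_-) \subset \WF(u_+) \cup \WF(u_-)$, note that near any $(0,\bfx_0)$ with $\bfx_0 \neq 0$ every nearby point is spacelike, so $\Delta_+$ is smooth there, and spacelike commutativity gives $\Delta_+(x) = \Delta_+(-x)$ locally. Using $\theta(x^0)+\theta(-x^0)=1$ one obtains $\Delta_F(x) = \theta(x^0)\Delta_+(x)+\theta(-x^0)\Delta_+(x) = \Delta_+(x)$ in this neighborhood, which is smooth, so $\WF(\Delta_F)$ is empty at such points, giving $\WF(\Delta_F) \subset D^* \cup C_F$. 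The reverse inclusion follows from the explicit Fourier transform $\widehat{\Delta_F}(k) = i/(k^2 - m^2 + i0^+)$, whose slow decay in every direction forces $D^* \subset \WF(\Delta_F)$ at the origin, and propagation of singularities for the Klein--Gordon operator then produces $C_F$. The main obstacle is precisely this cancellation: the Product Theorem supplies only upper bounds on $\WF(u_\pm)$ that retain spurious temporal covectors $(k_0,0,0,0)$ along $\{x^0 = 0,\ \bfx \neq 0\}$ which cannot be removed from either $u_\pm$ alone, so one must invoke the spacelike symmetry $\Delta_+(x) = \Delta_+(-x)$ to see that these cancel in the sum.
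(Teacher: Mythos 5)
Your proof follows the same overall route as the paper's: decompose $\Delta_F = u_+ + u_-$, establish existence of each product via H\"ormander's condition, bound the wavefront sets with the Product Theorem, and then argue that the spurious covectors along the spacelike hyperplane $\{x^0=0,\ \bfx\neq 0\}$ do not actually appear in $\WF(\Delta_F)$. For that last step the paper applies an orthochronous Lorentz boost $R$ to reduce to the case $(Rx)^0>0$ and invokes the Lorentz invariance of $\Delta_F$ and $\Delta_+$; your use of the spacelike symmetry $\Delta_+(x)=\Delta_+(-x)$ together with $\theta(x^0)+\theta(-x^0)=1$ is a slightly more elementary packaging of the same physics and is perfectly valid.

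The weak spot is the reverse inclusion. You claim that the slow decay of the global Fourier transform $\widehat{\Delta_F}(k)=i/(k^2-m^2+i0^+)$ ``forces $D^*\subset\WF(\Delta_F)$ at the origin,'' but the global Fourier transform only detects that $\Delta_F$ is not globally smooth; it does not by itself localize the slow decay to $x=0$. To conclude $D^*\subset\WF(\Delta_F)$ one must examine $\widehat{f\Delta_F}$ for $f$ supported near the origin, and separating the origin's contribution from that of the light-cone singularities away from the origin requires extra work. The paper instead uses $(\Box+m^2)\Delta_F=-i\delta$ together with property (h) of Section~\ref{propWF}, namely $\WF(Pu)\subset\WF(u)$ for $P$ a differential operator with smooth coefficients, which immediately gives $\WF(\delta)=D^*\subset\WF(\Delta_F)$. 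Similarly, your appeal to ``propagation of singularities'' to produce $C_F$ is underspecified, since propagation alone gives no seed singularity on each bicharacteristic. The simpler and complete argument is the one the paper makes for $x^0\neq 0$: on a neighborhood of a point with $x^0>0$ one has $\Delta_F=\Delta_+$ as distributions (and $\Delta_F(x)=\Delta_+(-x)$ for $x^0<0$), and Proposition~\ref{exDeltap} gives $\WF(\Delta_+)$ exactly (equality, not just inclusion), so $\WF(\Delta_F)|_{x^0\neq 0}=C_F$ follows at once.
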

\begin{figure}
\begin{center}
\includegraphics[width=5.0cm]{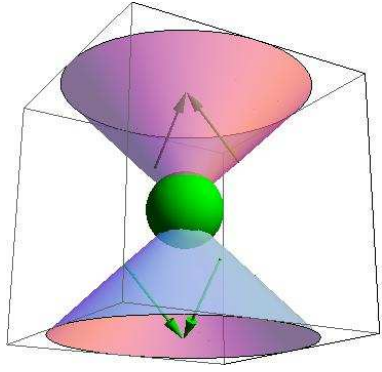}
\caption{Wavefront set of $\Delta_F$, the
wavefront set at the origin is a ball.
  \label{figfeynman}}
\end{center}
\end{figure}
\begin{proof}
$\theta(x^0) \Delta_+(x)$ is a product of distributions,
we must first show that it exists. As a distribution
in $\mathbb{R}^4$, $\theta(x^0)$ is defined by
$\theta(x^0)(f)=\int_{x^0\geqslant 0} f(x) dx$. Therefore,
it is the tensor product of the Heaviside 
distribution in the variable $x^0$ by the unit
distribution in the variables $x^1,\dots,x^3$:
$\theta(x^0)=\theta\otimes 1$. The distribution 1
is smooth and its wavefront set is empty. 
Thus, by property (i) of section~\ref{propWF},
we have
$\WF(\theta(x^0))\subset\{(0,\bfx;\pm\lambda,0)\,|\,
\bfx\in\mathbb{R}^3, \lambda>0\}$.
In fact, the inclusion can be replaced by
an equal sign~\cite[p.~108]{ReedSimonII}.
By theorem~\ref{disprodthm}, we see that
the product $\theta(x^0)\Delta_+(x)$ exists.
Indeed, $\singsupp \theta(x^0) \cap \singsupp \Delta_+
= \{0\}$ and, at $x=0$, the
allowed cotangent vectors are
$k=(\pm\lambda,0)$ with $\lambda>0$ for $\theta(x^0)$ and
$q=(|\bfk|,\bfk)$ with $\bfk\not=0$ for $\Delta_+$.
Thus, $q+k\not=0$ and the product exists. 
A similar calculation for $\theta(-x^0)\Delta_F(-x)$ shows
that $\Delta_F$ is a well defined distribution on $\bbR^4$.
However, the estimate of the wavefront set given by the Product
Theorem is not precise enough because of the
contribution of $\WF(\theta)$.
To calculate the wavefront set of
$\Delta_F$, we use the causality method
of Bogoliubov. Let 
$x=(x^0,\bfx)\in\mathbb{R}^4\setminus\{0\}$.
If $x^0>0$ then there 
is a neighborhood $U$ of $x$ 
such that $\forall y\in U, y^0>0$.
Therefore, 
$\Delta_F|_U=\theta(x^0)\Delta_+(x)|_U=\Delta_+(x)|_U$
thus $WF(\Delta_F|_U)=WF(\Delta_+|_U)=S_+|_U $ 
and by definition of $S_+$:
$$
WF(\Delta_F|_U)=
\{(x;k)\telque x^0=|\bfx|, x^0=\lambda k_0,
x^i=-\lambda k_i, \lambda>0, x\in U \}.
$$
If $x^0<0$ then 
there is a neighborhood $U$ of $x$ such that
$\Delta_F|_U=\theta(-x^0)\Delta_+(-x)|_U=\Delta_+(-x)|_U$.
Thus 
\begin{eqnarray*}
WF(\Delta_F|_U) &=& \{(x;k) \telque (-x;-k)\in S_+, x\in U\}
\\&=&
\{(x;k)\telque x^0=-|\bfx|,
x^0=\lambda k_0, x^i=-\lambda k_i, \lambda>0, x\in U \}.
\end{eqnarray*}
If $x^0=0$, then $x$ is space-like because $x\not=0$.
Thus, there exists some orthochronous Lorentz
transformation
$R\in SO^\uparrow(1,3)$ such that $(Rx)^{0}>0$. From the definition of $\Delta_F$ we deduce that
$\Delta_F(Rx)=\theta((Rx)^0) \Delta_+(Rx) + \theta(-(Rx)^0) \Delta_+(-Rx)$. Since
$\Delta_F$ and $\Delta_+$ are invariant by orthochronous Lorentz transformations, this
implies $\Delta_F(x)=\theta((Rx)^0) \Delta_+(x) + \theta(-(Rx)^0) \Delta_+(-x)$. Hence
we recover the case $x^0>0$ and $\Delta_F$ is smooth on a neighborhood of $x$ 
because $x$ is not light-like.
This gives us $\WF(\Delta_F)|_{x\not=0}=C_F$. 
To complete the proof of the proposition, recall that 
$(\Box+m^2)\Delta_F=-i\delta$~\cite[p.~124]{Itzykson}. Thus
property (h) of section~\ref{propWF} implies
$\WF(\delta)=D^*\subset\WF(\Delta_F)$. Since no wavefront
set at $x=0$ can be larger than $D^*$, we obtain
$\WF(\Delta_F)|_{x=0}=D^*$ and the proposition is proved.
\end{proof}

The calculation of $\WF(\Delta_F)$ was first made by
Duistermaat and H\"ormander~\cite{Duistermaat-72} 
after discussion with Wightman.
The analytic expression for the Feynman propagator
in position space is given by Zhang et al.~\cite{Zhang-10}.
The wavefront set of the advanced and retarded
solutions to the wave equation is
calculated in ~\cite{Duistermaat-72} and \cite[p.~115]{Strohmaier}.

\section{Properties of the wavefront set}
\label{propWF}
We now give without proof a number of properties
of the wavefront set.
Let $u$ and $v\in \calD'(\mathbb{R}^n)$. Then
\begin{itemize}
\item[(a)] $\WF(u)$ is a closed subset of
$\mathbb{R}^n\times
(\mathbb{R}^n\backslash\{0\})$~\cite[p.~92]{ReedSimonII}.
\item[(b)] For each $x\in \mathbb{R}^n$,
$\WF_x(u)=\{k \telque (x,k)\in \WF(u)\}$ is a cone,
i.e. $k\in\WF_x(u)$ and $\lambda>0$ implies
$\lambda k\in\WF_x(u)$~\cite[p.~92]{ReedSimonII}.
\item[(c)] $\WF(u+v)\subset \WF(u)\cup\WF(v)$~\cite[p.~92]{ReedSimonII}.
\item[(d)] $\singsupp u=\{x\telque WF_x(u)
\not=\emptyset\}$~\cite[p.~93]{ReedSimonII}.
\item[(e)] If $u$ is a tempered distribution and $\hat{u}$ has support
in a closed cone $C$, then for each $x$,
$\WF_x(u)\subset C$~\cite[p.~93]{ReedSimonII}.
\item[(f)] Let $U\subset \mathbb{R}^m$ and $V\subset \mathbb{R}^n$
be two open sets. For any smooth ($C^\infty$) map $f:U\longrightarrow V$
we define
\[
 N_f:= \{(f(x), k)\in V\times \mathbb{R}^n\telque
  k\circ df_x =0\},
\]
where $ k\circ df_x := ( k_idy^i)\circ df_x:= 
 k_idf^i_x$.
Consider the \emph{pull-back operator} 
$u\longmapsto f^*u:= u\circ f$ defined
on smooth maps $u$ on $V$.
Then it is possible to extend this operator to
 the space of distributions $u\in \mathcal{D}'(V)$ 
which satisfy $N_f\cap WF(u) = \emptyset$ in an unique way (if we furthermore require
some continuity assumptions, see \cite[Thm 8.2.4]{HormanderI}).
Moreover the wavefront set of $f^*u$ is contained in the set
\[
 f^*WF(u):= \{(x, k\circ df_x)|(f(x),k)\in WF(u)\}.
\]
\cite[Thm 8.2.4]{HormanderI} (beware that, in the definition of the inverse image of a distribution
by a diffeomorphism in \cite[p.~93]{ReedSimonII}, the expression for the
wavefront set of $f^*u$ is not correct.)
\item[(g)] If $u\in \calD'(U)$ and $f\in C^\infty(U)$, then
$\WF(fu)\subset \{\supp f\times (\mathbb{R}^n\backslash\{0\})\}
\cap \WF(u)$~\cite[p.~344]{Wagschal-11}.
\item[(h)] If $u\in \calD'(U)$ and 
$P$ is a partial differential operator with smooth
coefficients, then
$\WF(P u)\subset\WF(u)$~\cite[p.~256]{HormanderI}.
\item[(i)] If $u\in \calD'(U)$ and $v\in \calD'(V)$, then
$\WF(u\otimes v) \subset \big(\WF(u)\times \WF(v)\big)
\cup \big((\supp u\times \{0\})\times \WF(v)\big)
\cup \big(\WF(u)\times (\supp v\times 
\{0\})\big)$~\cite[p.~267]{HormanderI}.
\item[(j)] If $u\in \calD'(U\times U)$ is such that
(formally) $u(x,y)=v(x-y)$ for some
$v\in\calD'(V)$, then 
$\WF(u)=\{(x,y;k,-k)\telque (x-y;k)\in \WF(v)\}$~\cite[p.~118]{Strohmaier}
and \cite[p.~270]{HormanderI}.
\end{itemize}
As an application of the pull-back theorem, we
calculate the wavefront set of $\Delta_F$ for
a massless particle, whose analytic expression 
is~\cite[p.~133]{Itzykson}
\begin{eqnarray*}
\Delta_F(x)=\frac{1}{4\pi^2}\frac{1}{x^2-i0},
\end{eqnarray*}
where $x^2=(x^0)^2-|\bfx|^2$.
We first prove that
this distribution
is well defined
on $\mathbb{R}^4\setminus\{0\}$.
So $\Delta_F$
is just the
pull--back
of $(2\pi)^{-2}\left(t-i0\right)^{-1}$
by the $C^\infty$ map.
\begin{eqnarray}
f:x\in\mathbb{R}^4\setminus\{0\}\longmapsto 
(x^0)^2-\vert\bfx\vert^2\in\mathbb{R}.
\end{eqnarray}
Indeed, this map is smooth and
$N_f=\{(f(x),k)\telque 2 k (x^0dx^0-x^idx^i)=0\}$.
We know that $\WF(1/(t-i0^+))=\{(0;k)\telque k>0\}$.
Thus, the condition $N_f\cap WF(u)=\emptyset$
implies $x\not=0$ and  $\Delta_F$
is therefore well defined 
in $\mathcal{D}^\prime(\mathbb{R}^4\setminus\{0\})$.
Furthermore, by property (f) 
$WF(\Delta_F|_{x\not=0})$ is included in
the pull--back of $\WF(1/(t-i0^+))$ 
by $f$. 
We obtain:
\begin{eqnarray*}
WF(\Delta_F|_{x\not=0}) &\subset & 
\{(x;\lambda\circ df) \telque (f(x);\lambda)\in 
WF\left(t-i0\right)^{-1}\}.
\end{eqnarray*}
Therefore,
$WF(\Delta_F|_{x\not=0})=\{(x;k) | f(x)=0, k=\lambda
df(x)\subset\lambda(x^0,-\bfx),\lambda>0,x\neq 0  \}$.
To conclude, observe that
$\Delta_F$ is a homogeneous distribution, 
therefore by a theorem of H\"ormander
(\cite[Thm 3.2.4]{HormanderI}), it
admits an extension in 
$\mathcal{D}^\prime(\mathbb{R}^4)$.
The wavefront set of $\Delta_F$ at $x=0$ is calculated as
in the proof of Prop.~\ref{exDeltaF} by using
$\Box\Delta_F=-i\delta$ and we
recover  Prop.~\ref{exDeltaF} for $m=0$.

\section{The many faces of the wavefront set}
In this section we give several definitions
of the wavefront set.
Each of them can be useful in specific contexts.

\subsection{The frequency set}
It is possible to define the wavefront set in
terms of the \emph{frequency set} of distributions $u$,
denoted by $\Sigma(u)$~\cite[p.~254]{HormanderI},
which is the projection of the wavefront set of $u$ 
on the momentum (i.e. cotangent) space:
\begin{dfn}
\label{dfnHor}
Let  $u\in\calE'(\bbR^n)$, we define $\Sigma(u)$ to
be the closed cone in $\mathbb{R}^n\backslash\{0\}$
having no conic neighborhood $V$ such that,
$|\hat{u}(k)|\le C_N (1+|k|)^{-N}$
for $k\in V$ and for all $N=1,2,\dots$.
\end{dfn}
Friedlander and Joshi define the \emph{frequency set}
$\Sigma(u)$ by
\begin{dfn}
\label{dfnFried}
Let $u\in \calE'(\bbR^n)$, then the direction $k_0$ is not in
$\Sigma(u)\subset \mathbb{R}^n\backslash\{0\}$ iff there is
a conic neighborhood $V$ of $k_0$ such that, for all $N$,
there is a $C'_N$ such that
$|\hat{u}(k)|\le C'_N \langle k\rangle^{-N}$,
for all $k$ in $V$, where
$\langle k\rangle=(1+|k|^2)^{1/2}$.
\end{dfn}
Duistermaat (implicitly) proposed a third definition
\begin{dfn}
\label{dfnDuist}
Let $u\in \calE'(\bbR^n)$, then the direction $k_0$ is not in
$\Sigma(u)\subset \mathbb{R}^n\backslash\{0\}$ iff there is
a neighborhood $W$ of $k_0$ such that, for all $N$,
there is a constant $D_N$ such that
$|\hat{u}(\tau k)| \le D_N \tau^{-N}$ for $\tau\to\infty$ uniformly
in $k\in W$.
\end{dfn}
The proof of the equivalence of these definitions
is left to the reader.

\subsection{Several definitions of the wavefront set}
The frequency set is used in several definitions
of the wavefront set.
According to H\"ormander~\cite[p.~254]{HormanderI}
\begin{dfn}
\label{defWFHor}
Let $U$ be an open set of $\mathbb{R}^n$,
$u\in\calD'(U)$ and
$\Sigma_x(u) = \bigcap_{\phi} \Sigma(\phi u)$,
where $\phi$ runs over all elements of $\calD(U)$ such
that $\phi(x)\not=0$.
The wavefront set of $u$
is the closed subset of
$U\times (\mathbb{R}^n\backslash\{0\})$ defined by
\begin{eqnarray*}
WF(u)=\{(x;k)\in U\times (\mathbb{R}^n\backslash\{0\})
\telque k\in\Sigma_x(u)\}.
\end{eqnarray*}
\end{dfn}
For Duistermaat~\cite[p.~16]{Duistermaat} the wavefront set is:
\begin{dfn}
\label{defWFDuist}
If $u\in\calD'(U)$, then $WF(u)$ is defined as
the complement in
$U\times (\mathbb{R}^n\backslash\{0\})$  of the
collection of all $(x_0,k_0)\in
(\mathbb{R}^n\backslash\{0\})$ such that for
some neighborhood $U$ of $x_0$,
$V$ of $k_0$ we have for each $\phi\in \calD(U)$
and each $N$:
$\widehat{\phi u}(\tau k)=O(\tau^{-N})$ for
$\tau\to\infty$, uniformly in $k\in V$.
\end{dfn}
An equivalent definition was used by Chazarain and
Piriou~\cite[p.~501]{Chazarain}, who use the
name \emph{singular spectrum} but the notation
$\WF u$.

For Friedlander and Joshi~\cite[p.~145]{Friedlander}
(after correction of a misprint)
and Strichartz~\cite[p.~191]{Strichartz-03}
\begin{dfn}
\label{defWFFried}
Let $Y$ be an open set of $\mathbb{R}^n$
and $u\in\calD'(U)$, then we shall say that
$(x_0,k_0)\in U\times (\mathbb{R}^n\backslash\{0\})$
is not in $\WF(u)$ iff there exists
$\phi\in \calD(U)$ such that
$\varphi(x_0)\not=0$ and $k_0\notin \Sigma(\phi u)$.
\end{dfn}
For Eskin~\cite[p.~58]{Eskin}
\begin{dfn}
\label{defWFEskin}
Let $U$ be an open set of $\mathbb{R}^n$
and $u\in\calD'(U)$, then we shall say that
$(x_0,k_0)\in U\times (\mathbb{R}^n\backslash\{0\})$
is not in $\WF(u)$ iff there exists
$\phi\in \calD(U)$ such that
$\varphi(x_0)\not=0$ and
$|\widehat{\phi u}(k)|\le C_N (1+|k])^{-N}$
for all $N$ and all $k\not=0$ satisfying
$|\frac{k}{|k|}-\frac{k_0}{|k_0|}| < \delta$ for some $\delta >0$.
\end{dfn}
The proof of the equivalence of these definitions
is left to the reader.

\subsection{More definitions of the wavefront set}
In this section we gather alternative definitions
of the wavefront set, which show that the wavefront set is the single
solution of many different problems.

\subsubsection{Coordinate invariant definition}

A coordinate invariant definition of the wavefront
set was given by Duistermaat~\cite[p.~16]{Duistermaat},
following a first attempt by Gabor~\cite{Gabor-72}.
We consider a smooth $n$-dimensional manifold $M$,
its cotangent bundle $T^*M$ and the zero section
$Z$ of $T^*M$ (i.e. $Z=\{(x;k)\in T^*M\telque k=0\}$).
Then, the wavefront set of a distribution
$u\in\calD'(M)$ is a closed
conic subset of $\dotT^*M=T^*M\backslash Z$:
\begin{dfn}
If $M$ is a smooth $n$-dimensional manifold,
$u\in\calD'(M)$ and $(x_0;k_0)\in \dotT^*M$,
then $(x_0;k_0)\notin\WF(u)$
iff, for any smooth function
$\psi: M\times \mathbb{R}^p\to \mathbb{R}$,
with $d_x\psi(x_0,a_0)=k_0$, there are
open neighborhoods $U$ of $x_0$ and $A$ of $a_0$
such that, for any $\phi\in\calD(U)$ we have
for all $N\ge 1$:
$\langle u,e^{i\tau\psi(\cdot,a)}\phi\rangle=
O(\tau^{-N})$ for $\tau\to\infty$,
uniformly in $a\in A$.
\end{dfn}
This definition is surprisingly general because
the phase function $\psi$ is only required to
be smooth and to satisfy $d_x\psi(x_0,a_0)=k_0$.
The usual definition of the wavefront set is recovered
by choosing $A=\bbR^n$, $a=k$ and
$\psi(x,a)=k\cdot x$. In the coordinate invariant definition,
the open set $A$ is used to parametrize the 
covectors $k_0$ of the wavefront set
but its dimension $p$ is not necessarily equal to $n$.
Still, this general definition is equivalent to
the standard one (see Refs.~\cite[p.~17]{Duistermaat},
\cite[p.~542]{Chazarain} and \cite{Viet-wf2}).

\subsubsection{Pseudo-differential operators}
The original definition of the wavefront set was given by
H\"ormander in terms of pseudo-differential 
operators~\cite{Hormander-71},\cite[p.~89]{HormanderIII}:
\begin{eqnarray*}
\WF(u) &=& \bigcap\{\mathrm{char} P\telque Pu\in C^\infty(\bbR^n)\},
\end{eqnarray*}
where $P$ runs over the pseudo-differential operators of all 
orders~\cite[p.~85]{HormanderIII}. If $p_m(x,k)$ is the
principal symbol of $P$, then
$\mathrm{char} P=\{(x,k)\in \bbR^n\times(\bbR^n\backslash\{0\})
  \telque p_m(x,k)=0\}$ is the set of characteristic points
of $P$~\cite[p.~87]{HormanderIII}.
A proof of the equivalence with the other definitions can
be found in ~\cite[p.~307]{Folland}
(see also \cite[p.~78]{Grigis}).

\subsubsection{Wavelets and Co}
In the usual definitions of the wavefront set, the
distribution $u$ is multiplied by a large family
of test functions $f$ and the product is Fourier
transformed. It is in fact possible to use a single
function $f$ and to scale it.
More precisely, let $f$ be an even Schwartz function that does not
vanish at zero and, for any $\alpha$ with $0<\alpha<1$,
form the family of Schwartz functions
$f_t(y)=t^{\alpha n/2}f\big(t^\alpha(y-x)\big)$ for $t>0$.
Then, $(x,k)$ is not in the wavefront set of the tempered distribution
$u$ iff there exists an open
subset $U$ of $\bbR^n$ such that 
$\widehat{uf_t}(tq)$ is fast decreasing in the variable
$t>0$ uniformly in $q\in U$~\cite[p.~159]{Folland-89}.
This definition was first proposed by C{\'o}rdoba and Fefferman
for $\alpha=1/2$ and $f$ a Gaussian function~\cite{Cordoba-78}.
It is then similar to the FBI-transform
(see Ref.~\cite{Martinez} for a nice presentation and
Ref.~\cite{Wunsch-00} for a geometric version).

Although wavelets cannot be used to measure the wavefront set
because they are isotropic, some variants of them,
known as curvelets~\cite{Candes-05} or
shearlets~\cite{Grohs-12} provide an interesting
resolution of the wavefront set.

\section{Conclusion}
We have presented a review of the various guises of the wavefront
set. These different points of view should help grasp the meaning
of this concept. We also proposed two new descriptions of
the wavefront set of a characteristic distribution. Physically,
 we saw that the wavefront set is
related to the fact that, in some directions, destructive interferences
in Fourier space become weaker than for smooth functions. 
The wavefront set also describes
the directions along which the singularities of the distribution
propagate.
We hope that we have convinced the reader that the wavefront
set is a subtle but natural object. Its use should
not be limited to quantum field theory or many-body physics
because, as stressed by Martinez, it is also related
to the semi-classical limit~\cite[p.~134]{Martinez}.

It is ironic that, although the standard wavefront set is sufficient
to build a quantum theory of gauge fields and gravitation, 
it is not enough to describe the optics of crystals (in particular
the conical refraction). Higher order wavefront
sets were proposed~\cite{Liess} to solve that problem.

Finally, note that we have restricted our discussion
to the classical wavefront set.
Many variations have been devised:
analytic wavefront set (see~\cite{Strohmaier-02} and
\cite{Harris-13} for
a recent comparison of various definitions), homogeneous
wavefront set~\cite{Nakamura-05},
Gabor wavefront set~\cite{Rodino-14},
global wavefront 
set~\cite{Coriasco-13,Cappiello}, etc.

\section{Acknowledgements}
We thank Etienne Balan for useful critical remarks
and Andr\'e Martinez for stressing the semi-classical aspect
of the wavefront set.

\section*{References}

\end{document}